\newcommand{\mytitle}{Fully-Dynamic and Kinetic Conflict-Free Coloring of Intervals %
with Respect to Points\footnote{AvR and MR were supported by JST ERATO Grant Number JPMJER1201, Japan. MdB, AM, and GW were supported by the Netherlands' Organisation for Scientific Research (NWO) under project no.~024.002.003.}\setcounter{footnote}{2}}
\title{\mytitle}
\newcommand{\reals}{\mathbb{R}\xspace}
\newcommand{\Reals}{\reals}
\newcommand{\etal}{\emph{et~al.}\xspace}
\DeclareMathOperator{\bigOmega}{\Omega}
\DeclareMathOperator{\col}{col}
\newcommand{\alg}{{\sc alg}\xspace}
\DeclareMathOperator{\sig}{\mbox{sig}}
\newcommand{\floor}[1]{\left\lfloor #1 \right\rfloor}
\renewcommand{\leq}{\leqslant}
\renewcommand{\geq}{\geqslant}
\newcommand{\tree}{\mathcal{T}}
\newcommand{\parent}{\mathit{parent}}
\newcommand{\eps}{\varepsilon}
\newcommand{\myetal}{\emph{et al.}\xspace}
\newcommand{\myleft}{\mathrm{left}}
\newcommand{\myright}{\mathrm{right}}
\newcommand{\extr}{\mathrm{extr}}
\newcommand{\Sextr}{S_{\extr}}
\newcommand{\lev}{\mathrm{level}}
\newcommand{\mypred}{\mathit{pred}}
\newcommand{\mysucc}{\mathit{succ}}
\theoremstyle{plain}
\newtheorem{theorem}{Theorem}[section]
\newtheorem{lemma}[theorem]{Lemma}
\newcommand{\proofclaim}[1]{\noindent{\bf{}Proof of claim.} #1{}\hfill{\footnotesize$\square$}\vspace{6pt}}
\newcounter{myclaim}
\newcommand{\myclaim}[2]{\vspace{6pt}\noindent{\refstepcounter{myclaim}\bf{}Claim~\themyclaim. }{\it{}#1} \\[6pt]
\proofclaim{#2}}
\definecolor{nicegreen}{rgb}{0,0.7,0.3}
\definecolor{niceyellow}{rgb}{0.9,0.7,0.07}
\definecolor{nicepurple}{rgb}{0.5,0.2,0.8}
\author{Mark de Berg\thanks{Eindhoven University of Technology, the Netherlands.}
        \and
        Tim Leijsen$^{\ddagger}$
        \and
        Aleksandar Markovic$^{\ddagger}$
        \and
        Andr\'e{} van Renssen\thanks{University of Sydney, Australia.}
        \and 
        Marcel Roeloffzen$^{\ddagger}$
        \and
        Gerhard Woeginger\thanks{RWTH Aachen University, Germany.}
        }
\date{}
\begin{document}

\maketitle

%%%%%%%%%%%%%%%%%%%%%%%%%%%%%%%%%%%%%%%%%%%%%%%%%%%%%%%%%%%%%%%%%%%

%\linenumbers

%-----------------------------ABSTRACT-----------------------------
\begin{abstract}
We introduce the fully-dynamic conflict-free coloring problem for a set~$S$ of
intervals in~$\Reals^1$ with respect to points, where the goal is
to maintain a conflict-free coloring for~$S$ under insertions
and deletions. A coloring is conflict-free if for each point $p$ contained in some interval, $p$ is contained in an interval whose color is not shared with any other interval containing $p$. We investigate trade-offs between the number
of colors used and the number of intervals that are
recolored upon insertion or deletion of an interval. Our results
include:
\begin{itemize}
\item a lower bound on the number of recolorings as a function of the number
      of colors, which implies that with $O(1)$ recolorings per update the
      worst-case number of colors is $\Omega(\log n/\log\log n)$, and that any strategy
      using $O(1/\eps)$ colors needs $\Omega(\eps n^{\eps})$ recolorings;
\item a coloring strategy that uses $O(\log n)$ colors at the cost of $O(\log n)$ recolorings,
      and another strategy that uses $O(1/\eps)$ colors at the cost of
      $O(n^{\eps}/\eps)$ recolorings;
\item stronger upper and lower bounds for special cases.
\end{itemize}
We also consider the kinetic setting where the intervals move continuously
(but there are no insertions or deletions); here we show how to maintain a coloring
with only four colors at the cost of three recolorings per event and show this is tight.
\end{abstract}
%-----------------------------ABSTRACT----------------------------#

%%%%%%%%%%%%%%%%%%%%%%%%%%%%%%%%%%%%%%%%%%%%%%%%%%%%%%%%%%%%%%%%%%%
%-----------------------------DOCUMENT----------------------------%
%%%%%%%%%%%%%%%%%%%%%%%%%%%%%%%%%%%%%%%%%%%%%%%%%%%%%%%%%%%%%%%%%%%

%%%%%%%%%%%%%%%%%%%%%%%%%%%%--SECTION--%%%%%%%%%%%%%%%%%%%%%%%%%%%%
%------------------------------------------------------------------
\section{Introduction}
%------------------------------------------------------------------
Consider a set~$S$ of fixed base stations that can be used for communication by
mobile clients. Each base station has a transmission range, and a client can potentially
communicate via that base station when it lies within the transmission range. However, when
a client is within reach of several base stations that use the same frequency, the
signals will interfere. Hence, the frequencies of the base stations should be assigned
in such a way that this problem does not arise. Moreover, the number of
used frequencies should not be too large. Even~\etal~\cite{even-cf-03} and Smorodinsky~\cite{thesis-smorodinsky}
introduced conflict-free colorings to model this problem, as follows.
Let~$S$ be a set of disks in the plane, and for a point $q\in \Reals^2$ let
$S(q)\subseteq S$ denote the set of disks containing the point~$q$.
A coloring of the disks in~$S$ is \emph{conflict-free} if,
for any point~$q\in \Reals^2$ with non-empty $S(q)$, the set~$S(q)$ has at least one disk with a
color that is unique among the disks in~$S(q)$.
Even~\etal~\cite{even-cf-03} proved that any set of
$n$~disks in the plane admits a conflict-free coloring with $O(\log n)$ colors,
and this bound is tight in the worst case.

The concept of conflict-free colorings can be generalized and extended in several
ways, giving rise to a host of challenging problems. Below we mention some of them, focussing on the papers most
directly related to our work. A more extensive overview is given
by Smorodinsky~\cite{Smorodinsky2013}. One obvious generalization
is to work with types of regions other than disks.
For instance, Even~\etal~\cite{even-cf-03} showed how to find a coloring with $O(\log n)$ colors for a set of translations of any single centrally symmetric polygon.
Har-Peled and Smorodinsky~\cite{harpeled-cf-05} extended this result to
regions with near-linear union complexity.
One can also consider the dual setting, where one wants to color a given set~$P$
of $n$ points in the plane, such that any disk---or rectangle, or other range from
a given family---contains at least one point with a unique color (if it contains
any point at all). This too was studied by Even~\etal~\cite{even-cf-03} and they
show that this can be done with $O(\log n)$ colors when the ranges are disks
or scaled translations of a single centrally symmetric convex polygon.

The results mentioned above deal with the static setting, in which the
set of objects to be colored is known in advance. This may not always be the
case, leading Fiat~\etal~\cite{fiat-ocf-05} to introduce
the \emph{online} version of the conflict-free coloring problem.
Here the objects to be colored arrive one at a time, and each object must be colored upon arrival.
There is also a lot of work on online coloring of other types (e.g. \cite{kierstead1981extremal,DBLP:reference/algo/Epstein16c}), but we restrict our discussion to conflict-free colorings.
Fiat~\etal~\cite{fiat-ocf-05} show that when coloring points in the plane with respect to disks,
$n$~colors may be needed in the online version. Hence, they turn their attention
to the 1-dimensional problem of online coloring points with respect to intervals.
They prove that this can be done deterministically with $O(\log^2 n)$ colors and randomized with $O(\log n \log\log n)$ colors with high probability.
Later Chen~\cite{chen-ocf-06} gave a randomized algorithm that
uses $O(\log n)$ colors with high probability.
In the same paper, similar results were obtained for conflict-free
colorings of points with respect to halfplanes, unit disks and axis-aligned rectangles
of almost the same size. In these cases the colorings use
$O(\mathrm{polylog}\ n)$ colors with high probability.
Bar-Noy, Cheilaris, and Smorodinsky \cite{Bar-Noy-cf-for-intervals-08}
discussed several versions of the deterministic one-dimensional variant.
Furthermore, Abam~\etal~\cite{Abam2014} studied
the dual version of coloring intervals on a line with respect to points, providing an algorithm that use $O(\log^3 n)$ colors in general or $O(\log n)$ colors when all intervals contain a specific point.
Later, Bar-Noy~\etal~\cite{barnoy-ocf-10} considered the case where
recolorings are allowed for each insertion. They prove that for coloring points
in the plane with respect to halfplanes, one can obtain a coloring with $O(\log n)$
colors in an online setting at the cost of $O(n)$ recolorings in total.
More recent variants include strong conflict-free colorings~\cite{cheilaris-scf-14,hks-cfcms-10},
where we require several unique colors, and conflict-free multicolorings~\cite{barschi-cfm-15},
which allow assigning multiple colors to a point.
Even more variants of online conflict-free colorings
can be found in the survey \cite{Smorodinsky2013}.

%--------------------------------------------------------------------------------------
\paragraph{Our contributions.}
%--------------------------------------------------------------------------------------
We introduce a variant of the conflict-free coloring problem where the objects to
be colored arrive and disappear over time. This
\emph{fully-dynamic conflict-free coloring problem} models a scenario
where new base stations may be deployed (to deal with increased capacity demands,
for example) and existing base stations may
break down or be taken out of service (either permanently or temporarily).
We also define the \emph{semi-dynamic conflict-free coloring problem}
as the online variant where recolorings are allowed (or the fully-dynamic
variant without deletions). Note that when we talk about the
\emph{dynamic} variant, we mean \emph{fully-dynamic}.
These natural variants have, to the
best of our knowledge, not been considered so far. It is easy to see that,
unless one maintains a coloring in which any two intersecting objects
have distinct colors, there is always a sequence of deletions that invalidates
a given conflict-free coloring. Hence, recolorings are needed to ensure
that the new coloring is conflict-free. This leads to the question: how many
recolorings are needed to maintain a coloring with a certain number of
colors? We initiate the study of fully-dynamic conflict-free colorings
by considering the problem of coloring intervals
with respect to points. In this variant, we are given a (dynamic) set~$S$ of intervals
in $\Reals^1$, which we want to color such that for any point $q\in \Reals^1$
the set $S(q)$ of intervals containing~$q$ contains an interval with a unique color.
This version of the problem can be used to model the case where the base stations are located along a highway, for instance, and 1-dimensional range and frequency assignment problems have already been studied in various
settings \cite{barnoy-ocf-10, cheilaris-scf-14, fiat-ocf-05}.
Moreover, the lower bounds that we prove hold for the 2-dimensional problem as well.
In the static setting, coloring intervals is rather easy: a simple procedure
yields a conflict-free coloring with three colors. The dynamic version turns out
to be much more challenging.

In Section~\ref{se:lower-bound} we prove lower bounds on the possible tradeoffs
between the number of colors
used and the worst-case number of recolorings per update:
for any algorithm that maintains a conflict-free coloring
on a sequence of~$n$ insertions of intervals
with at most $c(n)$ colors and at most $r(n)$ recolorings per insertion, we
must have~$r(n)>n^{1/(c(n)+1)} / (8c(n))$.
This implies that for $O(1/\eps)$ colors we need $\Omega(\eps n^\eps)$ recolorings per updated,
and with only $O(1)$ recolorings per update we must use $\Omega(\log n / \log\log n)$ colors.

In Section~\ref{se:algorithms} we then present several algorithms that achieve
bounds close to our lower bound. All bounds are worst-case, unless specifically stated otherwise.
First, we present two algorithms for the case where the
interval endpoints come from a universe of size~$U$. One algorithm uses $O(\log U)$ colors
and two recolorings per update; the other uses $O(\log_t U)$
colors and $O(t)$ recolorings per update in the worst case, where $2\leq t \leq U$ is a parameter.
We then extend the second algorithm to an unbounded universe, leading to two
results: we can use $O(\log_t n)$ colors and perform at most $O(t \log_t n)$ recolorings per update
for any fixed $t\geq 2$, or we can use $O(1/\eps)$ colors and $O(n^\eps / \eps)$
recolorings, amortized, for any fixed $\eps > 0$.

Next, in Section~\ref{se:online} we consider the online setting, where intervals arrive one by one
and recolorings are not allowed. Abam \etal~\cite{Abam2014} observed that for this
setting there is a lower bound of $\log n$ on the number of required colors. 
They also provide an upper bound of $O(\log^3 n)$ for general instances, as well as an upper bound of $\log n$ when all intervals are nested and contain the origin. We provide an upper bound of $\log n$ on nested intervals even if not all of them contain the origin. We also use the result of Bar-Noy \etal~\cite{barnoy-ocf-10} to provide a non-deterministic algorithm on any instance (non-nested) with $O(\log n)$ colors with high probability.

Finally, in Section~\ref{se:kinetic} we turn our attention to
\emph{kinetic conflict-free colorings}. Here the intervals do not appear or disappear,
but their endpoints move continuously on the real line. At each \emph{event} where
two endpoints of different intervals cross each other, the coloring may need to
be adapted so that it stays conflict-free. One way to handle this is to
delete the two intervals involved in the event, and re-insert them with the
new endpoint order. We show that a specialized approach is much more efficient:
we show how to maintain a conflict-free coloring with four colors at the cost of
three recolorings per event. We also show that on average $\Theta(1)$
recolorings per event are needed in the worst case when using only four colors.

%------------------------------------------------------------------
\paragraph{Preliminaries: the chain method.}
%------------------------------------------------------------------
We first present the so called \emph{chain method}
that colors intervals statically with three colors.
This method is used later, but we present it here
to give the reader a sense of CF-colorings.
Consider the intervals~$I_1,\ldots,I_n$ and suppose
without loss of generality that their union
is an interval.
Let $I_i$ be the
    interval with the leftmost left endpoint in this component.
    We color $I_i$ red. Then, of all intervals whose left endpoint lies inside~$I_i$
    we select the interval $I_j$ with the rightmost right endpoint and color it blue,
    of all intervals whose left endpoint lies inside~$I_j$
    we select the interval~$I_k$ sticking out furthest to the right and color it red,
    and so on. This way we create a chain of intervals that are alternatingly
    colored red and blue. We then color all
    the remaining intervals using the dummy color,
    see Fig.~\ref{fi:chain}.
    \begin{figure}[bt]
    \begin{center}
    \includegraphics{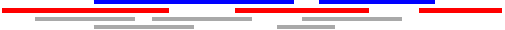}
    \end{center}
    \caption{Coloring statically intervals using the chain method.
             Gray is the dummy color.}
    \label{fi:chain}
    \end{figure}

%%%%%%%%%%%%%%%%%%%%%%%%%%%%--SECTION--%%%%%%%%%%%%%%%%%%%%%%%%%%%%
%------------------------------------------------------------------
\section{Lower bounds for semi-dynamic conflict-free colorings}\label{se:lower-bound}
%------------------------------------------------------------------
In this section we present lower bounds on the semi-dynamic (insertion
only) conflict-free coloring problem for intervals. More precisely,
we present lower bounds on the number of recolorings necessary to
guarantee a given upper bound on the number of colors. We prove
a general lower bound and a stronger bound for so-called local algorithms.
The general lower bound uses a construction where the choice
of segments to be added depends on the colors of
the segments already inserted. This adaptive construction is also valid for
randomized algorithms, but it does not give a lower bound on the expected behavior.

%-----------------------------THEOREM------------------------------
\begin{theorem}\label{thm_lowerbound}
Let \alg be a deterministic algorithm for the semi-dynamic conflict-free
coloring of intervals. Suppose that on any sequence of $n>0$~insertions,
\alg uses at most $c(n)$~colors and $r(n)$~recolorings per insertion,
where $r(n)>0$. Then
$r(n) > n^{1/(c(n)+1)} / (8c(n))$.
\end{theorem}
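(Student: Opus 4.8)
The plan is to prove the contrapositive by an adaptive adversary: I will show that any algorithm restricted to $c$ colors and at most $r$ recolorings per insertion can survive only a bounded number of insertions, namely $N(c,r) \le (8cr)^{c+1}$. Granting this, any sequence of $n$ insertions that \alg actually handles must satisfy $n \le (8\,c(n)\,r(n))^{c(n)+1}$, and taking $(c(n)+1)$-th roots rearranges to $r(n) \ge n^{1/(c(n)+1)}/(8c(n))$; a small slack in the constants then yields the strict bound as stated. So the whole theorem reduces to constructing, for every $c$ and $r$, an adversary that forces \alg either to open a $(c+1)$-st color or to perform more than $r$ recolorings in a single step, using at most $(8cr)^{c+1}$ insertions.

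I would build this adversary by induction on the number of colors $c$, where each level of the recursion is responsible for forcing the algorithm to ``spend'' one more color. The base of the induction is essentially the one-color observation: with a single color, two overlapping intervals already admit no valid conflict-free coloring, so a handful of insertions suffice. For the inductive step, the adversary lays out, in disjoint regions of the line, many independent copies of the level-$(c-1)$ gadget, and then places a spanning interval $B$ over all of them. The role of $B$ is to remove one color from the budget available inside each region: because $B$ is stabbed together with every local interval, its color is effectively unusable as a \emph{local} unique witness, so each sub-region must already be solved with $c-1$ colors --- exactly the situation the inductive hypothesis controls. Crucially the adversary is adaptive: it inspects the colors \alg chooses and, by pigeonhole over the at most $c$ colors, keeps only those copies whose forced witness receives a common color. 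A branching factor of order $8cr$ per level guarantees that after discarding it still retains more than $r$ aligned copies; the total number of intervals is then the product of the $c+1$ per-level factors, giving the $(8cr)^{c+1}$ bound.

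The combinatorial heart of the argument is a rigidity lemma for conflict-free colorings of nested intervals with few colors: I would show that in any valid $c$-coloring of such a configuration there is, for each sufficiently deep point, an identifiable interval that is the sole interval of its color among those stabbing the point, and that this ``unique witness'' is \emph{forced} rather than a matter of the algorithm's choice. Given the lemma, the adversary's triggering move is a single, adaptively placed interval that simultaneously invalidates the witnesses of the $>r$ aligned copies --- a nested or spanning interval whose own color is pinned by the deep structure, so that \alg cannot sidestep the conflict by a lucky choice of color for the new interval. Re-establishing a valid coloring then requires recoloring one interval in each of the $>r$ copies, or introducing a new color, which is the desired contradiction.

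The step I expect to be the main obstacle is exactly this forcing/non-repairability claim: proving that the cascade cannot be absorbed cheaply. Two things must be controlled. First, the rigidity lemma must genuinely pin down the witnesses and their colors, so that the pigeonhole alignment across copies is meaningful. Second, and harder, I must rule out that \alg repairs the damaged witnesses with only $O(1)$ recolorings by some clever global rearrangement; this is where adaptivity is indispensable, since it prevents \alg from preparing a ``robust'' coloring in advance --- and it is also the reason the bound applies to randomized algorithms only in the non-expected sense noted in the section's preamble. The remaining work is bookkeeping: verifying that the per-level overhead of constructing and pruning the copies stays within the factor $8c$, and that the $c+1$ per-level factors multiply to the stated $(8cr)^{c+1}$, from which the theorem follows.
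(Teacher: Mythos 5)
Your overall skeleton does match the paper's proof: an adaptive adversary, a levelled construction in which each level is meant to force one fresh color, a pigeonhole over the at most $c$ colors, a per-level branching factor of order $8cr$, and a final inequality $n \leq (8cr)^{c+1}$ that rearranges to the stated bound. However, the step you yourself flag as the main obstacle --- the ``rigidity lemma'' asserting that in any valid $c$-coloring of a nested configuration the unique witnesses and their colors are \emph{forced} --- is exactly where the proof must live, and no such lemma can be true. The static version of this problem admits a conflict-free coloring with three colors for arbitrary intervals (the chain method of Theorem~\ref{th:bounded-universe}(ii)), and with two colors for nested families (color the maximal intervals with one color and everything else with the dummy color), so a fixed configuration never forces more than $O(1)$ colors, nor does it pin down which intervals serve as witnesses. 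Relatedly, your claim that the spanning interval $B$ ``removes one color from the budget'' inside each copy is false: at any point where $B$'s color is absent among the local intervals, $B$ itself is a legitimate unique witness, so the copies need not be solved with $c-1$ colors. Any correct argument must instead exploit the colors \alg has \emph{already committed to}, together with its limited recoloring budget; this is precisely the content you defer, and it is not bookkeeping.

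The paper's mechanism, absent from your outline, has two ingredients. First, a level-$i$ interval does \emph{not} span its group of level-$(i-1)$ gadgets: it covers only $2r$ of the $4r$ surviving (``living'') gadgets in its group, leaving the other $2r$ in the empty space to its right. This half-covering keeps the induction alive: an uncovered point inside the new interval makes that interval a clean witness for the new designated color, while the untouched gadgets in the empty space retain, for every earlier designated color $c_j$, a point covered by exactly one interval of color $c_j$ \emph{and by nothing else}, plus a point covered by nothing at all. With your $B$ spanning all copies, every such point is additionally covered by $B$, and the forcing step --- a later interval colored $c_j$ over such a point creates an outright conflict --- collapses. Second, non-repairability is not a structural lemma but a counting argument: to destroy a gadget, \alg must recolor all $2r$ living sub-gadgets on one of its two sides; the gadgets are disjoint and \alg has only $r n_i$ recolorings during round $i$, so at least half of the $n_i$ gadgets survive. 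Combined with the pigeonhole over eligible colors this yields $|R_i^*| \geq \lfloor |R_{i-1}^*|/4r\rfloor /2c \geq n_1/(8rc)^i - 1$, and since each round needs a fresh designated color, the number of rounds is at most $c$, giving $n \leq (8rc)^{c+1}$. Your final ``triggering move'' (a single insertion forcing more than $r$ recolorings) is not needed for this conclusion, and making it sound would itself require witnesses for all $c$ colors in more than $r$ disjoint copies --- that is, the very machinery whose construction you have left open.
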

%------------------------------------------------------------------
\begin{proof}
We first fix a value for~$n$ and define~$c:=c(n)$ and~$r:=r(n)$.
Our construction will proceed in rounds. In the~$i$-th
round we insert a set~$R_i$ of~$n_i$ disjoint intervals---which
intervals we insert depends on the current coloring provided
by \alg. After $R_i$ has been inserted (and colored by \alg),
we choose one of the colors used by \alg for $R_i$ to be the
\emph{designated color} for the $i$-th round. We denote this
designated color by~$c_i$. We will argue that in each round we can pick a
different designated color, so that the number of rounds, $\rho$, is a lower bound
on the number of colors used by \alg. We then prove a lower bound on $\rho$
in terms of~$n, c$, and~$r$, and  derive the theorem from the inequality~$\rho \leqslant c$.
\medskip

To describe our construction more precisely, we need to introduce
some notation and terminology. Let~$R_i := \{I_1,\ldots,I_{n_i}\}$,
where the intervals are numbered from left to right. (Recall that
the intervals in~$R_i$ are disjoint.) To each interval~$I=I_j$
we associate the set~$I^e:=(a,b)$,
where~$a$ is the right endpoint of~$I$,
and~$b$ is the left endpoint of~$I_{j+1}$ if~$j<n_i$ and~$+\infty$ if~$j=n_i$, that is, $I^e$ represents the empty space to the right of $I$.
We call~$(I,I^e)$ an \emph{$i$-brick}. We define the color of
a brick~$(I, I^e)$ to be the color of~$I$, and we say a point
or an interval is contained in this brick if it is contained in~$I\cup I^e$.
Recall that each round~$R_i$ has a designated color~$c_i$.
We say that an $i$-brick~$B:=(I,I^e)$ is \emph{living} if:
\begin{itemize}
\item $I$ has the designated color~$c_i$;
\item if~$i>1$ then both~$I$ and~$I^e$ contain living $(i-1)$-bricks.
\end{itemize}
A brick that is not alive is called \emph{dead} and an event such as a recoloring that causes a brick to become dead is said to \emph{kill} the brick.
By recoloring an interval~$I$, \alg can kill the brick $B=(I,I^e)$
and the death of $B$ may cause some bricks containing~$B$ to be killed as well.

We can now describe how we generate the set~$R_i$ of intervals we
insert in the $i$-th round and how we pick the designated colors.
(Note that the designated color of a round is fixed once it is picked;
it is not updated when recolorings occur.)
We denote by~$R_i^*$ the subset of intervals~$I\in R_i$ such that
$(I,I^e)$ is a living $i$-brick. Note that $R_i^*$ can be defined only
after the $i$-th round, when we have picked the designated color~$c_i$.
\begin{enumerate}
\item The set~$R_1$ contains the~$\frac{n}{2}$
    intervals~$[0,1],[2,3],\ldots,[n-2,n-1]$, and the designated
    color~$c_1$ of the first round is the color used most often in the
    coloring produced by \alg after insertion of the last interval in~$R_1$.
\item To generate~$R_i$ for~$i>1$, we proceed as follows.
    Partition~$R^*_{i-1}$ into groups
    of~$4r$ consecutive intervals. (If~$|R^*_{i-1}|$ is not a multiple
    of~$4r$, the final group will be smaller than~$4r$. This group will be ignored.)
    For each group $G := I_1,\ldots,I_{4r}$ we put an interval $I_G$ into~$R_i$,
    which starts at the left endpoint of~$I_1$ and ends slightly before the
    left endpoint of~$I_{2r+1}$; see Fig.~\ref{fi:bricks} for an illustration.
    \begin{figure}[bt]
    \begin{center}
    \includegraphics{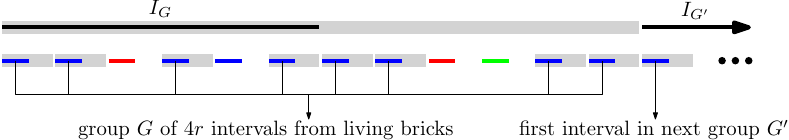}
    \end{center}
    \caption{Example of how the intervals are created when $r=2$.
    The designated color $c_{i-1}$ is blue, and the grey rectangles around them
    indicate living $(i-1)$-bricks. The grey rectangle around $I_G$ indicates the
    brick $(I_G,I_G^e)$. Note that $I_{G'}$ extends further to the right.}
    \label{fi:bricks}
    \end{figure}

    The designated color~$c_i$ is picked as follows. Consider the coloring
    after the last interval of~$R_i$ has been inserted, and let~$C(i)$ be
    the set of colors assigned by \alg to intervals in~$R_i$ and that are
    not a designated color from a previous round---we argue below
    that~$C(i)\neq \emptyset$. Then we pick~$c_i$ as the color from~$C(i)$
    that maximizes the number of living $i$-bricks.
\end{enumerate}
We continue generating sets~$R_i$ in this manner until~$|R^*_{i}| <4r$,
at which point the construction finishes.
Below we prove that in each round \alg must introduce a new designated color,
and we prove a lower bound on the number of rounds in the construction.

%-----------------------------CLAIM--------------------------------
\myclaim{%
Let $B=(I, I^e)$ be a living $i$-brick. Then for any $j\in\{1,\ldots,i\}$
there is a point~$q_j\in I\cup I^e$ that is contained in a single interval
of color~$c_j$ and in no other interval from $\bigcup_{\ell=1}^{i-1} R_\ell$.
Moreover, there is a point~$q_j\in I\cup I^e$ not contained in any interval
from~$\bigcup_{\ell=1}^{i-1} R_\ell$.%
}
%-----------------------------CLAIM--------------------------------
{%
We prove this by induction on~$i$. For $i=1$ the statement is trivially true,
so suppose $i>1$. By definition, both $I$ and $I^e$ contain living
$(i-1)$-bricks, $\overline{B}$ and $\overline{B}^e$.
Using the induction hypothesis we can now select a point~$q_j$ with the desired
properties: for $j=i$ we use that $\overline{B}$ contains a point
that is not contained in any interval from $\bigcup_{\ell=1}^{i-1} R_\ell$,
for $j< i$  we use that $\overline{B}^e$ contains a point in an interval
of color~$c_j$ and in no other interval from $\bigcup_{\ell=1}^{i-1} R_\ell$,
and to find a point not contained in any interval from $\bigcup_{\ell=1}^{i-1} R_\ell$
we can also use~$\overline{B}^e$.%
}
%-----------------------END OF PROOF OF CLAIM --------------------------

Now consider the situation after the $i$-th round, but before we have chosen the
designated color~$c_i$.  We say that a color~$c$ is \emph{eligible} (to become $c_i$)
if $c \neq c_1,\ldots,c_{i-1}$, and we say that an $i$-brick $(I,I^e)$ is eligible if
its color is eligible and $I$ and $I^e$ both contain living $(i-1)$-bricks (if $i>1$).
Note that due to some recolorings, some of the newly inserted intervals
might not contain any living brick and hence can never be living no matter
the designated color; the next claim shows that at most half intervals
inserted this round are eligible.

%-----------------------------CLAIM--------------------------------
\myclaim{%
Immediately after the $i$-th round, at least half of the $i$-bricks are eligible.%
}
%------------------------------------------------------------------
{%
Consider an $i$-brick $(I,I^e)$. At the beginning of the $i$-th round, before we have actually inserted
the intervals from $R_i$, both the interval~$I$ and its empty space~$I^e$ contain~$2r$ living
$(i-1)$-bricks. As the intervals from $R_i$ are inserted, \alg may recolor certain intervals from
$R_1\cup\ldots\cup R_{i-1}$, thereby killing some of these $(i-1)$-bricks.
Now suppose that \alg recolored at most $2r-1$ of the intervals from $R_1\cup\ldots\cup R_{i-1}$
that are contained in $I\cup I^e$. Then both $I$ and $I^e$ still contain a living $(i-1)$-brick.
By the previous claim and the definition of a conflict-free coloring, this implies \alg cannot use any of the colors~$c_j$ with $j<i$
for~$I$. Hence, the color of $I$ is eligible and the $i$-brick $(I,I^e)$ is eligible as well.

It remains to observe that \alg can do at most $r n_i$ recolorings during the $i$-th round.
We just argued that to prevent an $i$-brick from becoming eligible, \alg must do at least
$2r$~recolorings inside that brick. Hence, \alg can prevent at most half of the $i$-bricks from
becoming eligible.%
}
%-----------------------------CLAIM-------------------------------#

Recall that after the $i$-th round we pick the designated color~$c_i$ that maximizes the
number of living $i$-bricks. In other words, $c_i$ is chosen to maximize $|R^*_i|$.
Next we prove a lower bound on this number. Recall that $\rho$ denotes the number of rounds.

%-----------------------------CLAIM--------------------------------
\myclaim{%
For all $1\leq i\leq \rho$ we have $|R^*_i| \geq n_1/(8rc)^i -1$.%
}
%------------------------------------------------------------------
{%
Since \alg can use at most $c$ colors, we have $|R^*_1| \geq n_1/c$.
Moreover, for $i>1$ the number of intervals we insert is $\floor{|R^*_{i-1}|/4r}$.
By the previous claim at least half of these are eligible.
The eligible intervals have at most~$c$ different colors,
so if we choose~$c_i$ to be the most common color among them we see that
$|R^*_i| \geq \floor{|R^*_{i-1}|/4r}/2c$. We thus obtain the following
recurrence:
\begin{equation}\label{eq-rici}
|R^*_i| \geqslant \left\lbrace \begin{array}{l l}
\dfrac{\floor{|R^*_{i-1}|/4r}}{2c} & \text{if } i > 1, \\[10pt]
\dfrac{n_1}{c} & \text{if } i=1.
\end{array} \right.
\end{equation}
We can now prove the result using induction.
\[
|R^*_i| \ \ \geq \ \ \frac{\floor{|R^*_{i-1}|/4r}}{2c}
        \ \ \geq \ \ \frac{1}{2c}\cdot\left(  \left( \frac{n_1}{(8rc)^{i-1}} -1 \right) / 4r  -1 \right)
        \ \  > \ \  \frac{n_1}{(8rc)^{i}} - 1.
\]%
}
%-----------------------------CLAIM-------------------------------#

Finally we can derive the desired relation between~$n, c$, and~$r$.
Since~$n_1=n/2$ and $n_{i+1}<n_i/2$ for all~$i=1,\ldots,\rho-1$,
the total number of insertions is less than~$n$.
The construction finishes when $|R^*_i|<4r$. Hence, $\rho$,
the total number of rounds, must be such that
$n/(2(8rc)^{\rho}) - 1 \leq |R^*_{\rho}| < 4r$,
which implies $\rho > \log_{8rc} (n/(8r+2)) > \log_{8rc}n-1$.
The number of colors used by \alg is at least $\rho$,
since every round has a different designated color.
Thus $c>\log_{8rc}n-1$ and so $n\leq (8rc)^{c+1}$, from which the
theorem follows.
\end{proof}
%-----------------------------THEOREM-----------------------------#
Two interesting special cases of the theorem are the following:
with $r=O(1)$ we will have $c=\bigOmega\left(\log n/\log \log n \right)$,
and for~$c=O(1/\eps)$ (for some small fixed~$\eps>0$) we
need~$r=\bigOmega\left(\eps n^\varepsilon \right)$.
Note that the theorem requires $r>0$. Obviously the
$\bigOmega\left(\log n/\log \log n \right)$ lower bound on $c$
that we get for $r=1$ also holds for~$r=0$. For the special case of $r=0$---this
is the standard online version of the problem---we can
prove a stronger result, however: here we need at least $\lfloor \log n\rfloor +1$
colors. 
This bound even holds for a nested
set of intervals, that is, a set~$S$ such that
$I\subset I'$, $I'\subset I$, or~$I\cap I'=\emptyset$ for any two intervals $I,I'\in S$.
We also show in Section~\ref{se:online} that a greedy algorithm achieves this bound
for nested intervals.

%---------------------------------------------------------------------------
\subsection{Local algorithms}
%---------------------------------------------------------------------------
We now prove a stronger lower bound for so-called local algorithms.
Intuitively, these are deterministic algorithms where the color assigned to a newly
inserted interval~$I$ only depends on the structure and the coloring of the
connected component where~$I$ is inserted---hence the name \emph{local}.
More precisely, local algorithms are defined as follows.

Suppose we insert an interval $I$ into a set $S$ of intervals
that have already been colored. The union of the set $S\cup \{I\}$ consists
of one or more connected components. We define $S(I)\subseteq S$ to be the
set of intervals from $S$ that are in the same connected component as~$I$.
(In other words, if we consider the interval graph induced by $S\cup \{I\}$
then the intervals in $S(I)$ form a connected component with~$I$.)
Order the intervals in $S(I)\cup\{I\}$ from left to right according to
their left endpoint, and then assign to every interval its rank in this
ordering as its label. (Here we assume that all endpoints of the intervals
in $S(I)\subseteq S$ are distinct. It suffices to prove our lower bound
for this restricted case.) Based on this labelling we define a signature
for $S(I)\cup\{I\}$ as follows. Let $\lambda_1,\ldots,\lambda_k$, where
$k:= |S(I)|+1$, be the sequence of labels obtained by ordering the intervals
from left to right according to their right endpoint. Furthermore,
let~$c_i$ be the color of the interval labeled~$\lambda_i$, where
$c_i=\mbox{\sc nil}$ if the interval labeled~$i$ has not yet been colored.
Then we define the \emph{signature} of $S(I)\cup I$ to be the sequence
$\sig(I) := \langle \lambda_1,\ldots,\lambda_k,c_1,\ldots,c_k\rangle$;
see Fig.~\ref{fi:signature}.
\begin{figure}[bt]
    \begin{center}
    \includegraphics{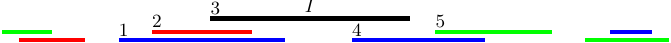}
    \end{center}
    \caption{Example of a signature. The set~$S(I)$ contains the segments labeled 1,2,4,5. The signature of $I$ is $\langle 2,1,3,4,5,\mbox{red},\mbox{blue},\mbox{\sc nil},\mbox{blue},\mbox{green}\rangle$.}
    \label{fi:signature}
    \end{figure}

We now define a semi-dynamic algorithm \alg to be \emph{local} if upon
insertion of an interval~$I$ the following holds:
(i) \alg only performs recoloring in $S(I)$, and
(ii) the color assigned to $I$ and the recolorings in $S(I)$
are uniquely determined by~$\sig(I)$, that is, the algorithm is deterministic with respect to~$\sig(I)$.
Note that randomized algorithms are not local.
Also note that when a set of disjoint intervals is inserted into an initially empty set, a local algorithm will give each interval the same color. Thus the algorithms we present in the next section are not local, as the coloring depends on a global tree structure.

To strengthen Theorem~\ref{thm_lowerbound} for the case of
local algorithms, it suffices to observe that the intervals
inserted in the same round must all receive the same color.
Hence, the factors~$c$
in the denominators of Inequality~(\ref{eq-rici}) disappear,
leading to the theorem below. Note that for~$r(n)=O(1)$,
we now get the lower bound $c(n)=\bigOmega(\log n)$.
%-----------------------------THEOREM------------------------------
\begin{theorem}\label{thm_lowerbound_local}
Let \alg be a local algorithm for the semi-dynamic conflict-free
coloring of intervals. Suppose that on any sequence of $n>0$~insertions,
\alg uses at most $c(n)$~colors and $r(n)$~recolorings
per insertion, where $r(n)>0$. Then
$r(n) \geqslant n^{1/(c(n)+2)} - 2$.
\end{theorem}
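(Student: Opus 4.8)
The plan is to adapt the construction in the proof of Theorem~\ref{thm_lowerbound} to exploit the defining property of local algorithms. The key structural observation is that in round~$i$ of that construction we insert a collection of disjoint intervals $I_G$, one per group $G$ of $4r$ consecutive living $(i-1)$-bricks. For a \emph{general} algorithm the colors assigned to these intervals could all differ, which is exactly why the factor~$c$ appears when we pick the most common color among the eligible bricks. For a \emph{local} algorithm, however, I claim these intervals are forced to receive the same color, because each is inserted into a connected component with the same signature.

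To make this precise, I would first argue that the intervals $I_G$ are inserted into pairwise disjoint connected components: since each $I_G$ spans from the left endpoint of its group's first brick to just before the midpoint brick, and distinct groups are built from disjoint sets of $(i-1)$-bricks, the component into which $I_G$ is inserted lies entirely within $I_G \cup I_G^e$ and is disjoint from the components of the other groups. Next I would observe that, by the recursive and uniform way the construction builds the rounds, all these components have the \emph{same} combinatorial structure and the \emph{same} coloring pattern on their already-colored intervals --- hence the same signature $\sig(I_G)$. By the locality condition~(ii), the color assigned to $I_G$ is a function of $\sig(I_G)$ alone, so all the $I_G$ in round~$i$ receive a common color. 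This lets me designate that single color as $c_i$ without any loss, so that \emph{every} eligible brick of round~$i$ is in fact living, rather than only a $1/c$ fraction of them.

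With this in hand, the factor $2c$ in the recurrence~(\ref{eq-rici}) is replaced by the factor~$2$ coming solely from the ``half the bricks are eligible'' claim (which is purely a recoloring-budget argument and does not involve~$c$ at all, so it survives unchanged). The recurrence becomes $|R^*_i| \geq \floor{|R^*_{i-1}|/4r}/2$, and tracing through the same induction yields a bound of the form $|R^*_i| \gtrsim n_1/(8r)^i$. Terminating when $|R^*_\rho| < 4r$ and using $\rho \leq c$ as before gives $n \leq (8r)^{c+2}$ up to the small additive slack, which rearranges to the claimed $r \geq n^{1/(c+2)} - 2$.

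The main obstacle I anticipate is the signature-invariance argument in the second paragraph: I must verify carefully that the components into which the various $I_G$ are inserted really do have identical signatures. This requires checking that the relative left-to-right and right-endpoint orderings of the $4r$ bricks in a group, together with their colors (all $c_{i-1}$ for the living ones, plus whatever the nested living sub-bricks look like recursively), are genuinely independent of which group we look at. Because the designated colors and the brick structure are defined uniformly across the construction, I expect this to hold, but it is precisely the place where the locality hypothesis does real work and where a subtle dependence on horizontal position --- which a local algorithm is by definition blind to --- must be ruled out. Everything downstream is a routine recalculation of the earlier proof with $c$ removed from the branching factor.
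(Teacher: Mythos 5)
Your central idea --- that locality plus determinism forces every interval inserted in the same round to receive the same color, because each one is inserted into a component with the same signature --- is exactly the observation the paper's proof is built on, and your outline of the signature-uniformity induction (including the point that recolorings are confined to a group's own component, so groups cannot influence one another) is sound in spirit. The genuine gap is in the last step: the arithmetic does not deliver the stated bound. Keeping groups of $4r$ bricks and the ``half the bricks are eligible'' factor gives the recurrence $|R^*_i| \geq \lfloor |R^*_{i-1}|/4r \rfloor / 2$, whose base is $8r$; terminating as before yields $n \leq (8r)^{c+2}$, which rearranges to $r \geq n^{1/(c+2)}/8$, \emph{not} to $r \geq n^{1/(c+2)}-2$. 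These are genuinely different inequalities: for, say, $n^{1/(c+2)}=16$ your bound only forces $r\geq 2$, while the theorem asserts $r \geq 14$. A constant factor multiplying $r$ inside the base survives the $(c+2)$-th root as a multiplicative (not additive) loss, so no ``small additive slack'' absorbs it.

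To get the theorem as stated you must exploit locality a second time, to shrink the group size itself, which is what the paper does. It abandons the brick/living/eligible machinery altogether: in round $i$ it partitions $R_{i-1}$ (not a surviving subset) into groups of only $r+2$ consecutive intervals, and $I_G$ covers the first $r+1$ of them. Freshness of the round's single common color then follows from a budget argument that uses locality property (i): the at most $r$ recolorings triggered by inserting $I_G$ all happen inside $S(I_G)$, and they cannot neutralize all $r+1$ covered intervals of the previous round (nor the skipped intervals of older rounds sitting in the gaps, which lie outside every inserted interval's component and hence are never touched). Since nothing is lost to dead bricks or to picking a most-popular color, the recurrence is simply $|R_i| \geq \lfloor |R_{i-1}|/(r+2) \rfloor$, with base $r+2$, and that base is precisely what turns into the additive form $r \geq n^{1/(c+2)}-2$. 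In short, your factor-$4r$ groups and factor-$2$ eligibility loss are holdovers from the non-local construction, where they were needed to survive adversarial recolorings; under locality they are not merely unnecessary slack --- removing them is the step the stated bound actually requires.
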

%-----------------------------THEOREM-----------------------------#
%-----------------------------PROOF-----------------------------
\begin{proof}
We first fix a value for~$n$ and define~$c:=c(n)$ and~$r:=r(n)$.
Similar to the previous construction, this construction also proceeds in rounds. In the~$i$-th
round we insert a set~$R_i$ of~$n_i$ disjoint intervals.
We will argue that in each round \alg picks a
different color, so that the number of rounds, $\rho$, is a lower bound
on the number of colors used by \alg. We then prove a lower bound on $\rho$
in terms of~$n, c$, and~$r$, and  derive the theorem from the inequality~$\rho \leqslant c$.

In the $i$-th round, we insert the following intervals:
\begin{enumerate}
\item The set~$R_1$ contains the~$\frac{n}{2}$
    intervals~$[0,1],[2,3],\ldots,[2n-2,2n-1]$. Since all intervals are disjoint, the local information \alg can use for each interval consists only of that single interval. Hence, since \alg is deterministic, it colors all intervals in $R_1$ with the same color.
    \item To generate~$R_i$ for~$i>1$, we proceed as follows.
    Partition~$R_{i-1}$ into groups
    of~$r+2$ consecutive intervals. (If~$|R_{i-1}|$ is not a multiple
    of~$r+2$, the final group will be smaller than~$r+2$. This group will be ignored.)
    For each group $G := I_1,\ldots,I_{r+2}$ we put an interval $I_G$ into~$R_i$,
    which starts at the left endpoint of~$I_1$ and ends slightly before the
    left endpoint of~$I_{r+2}$. Since local information for \alg for each of these
    intervals is the same and \alg is deterministic, it colors all
    intervals of $R_i$ using the same color and the output
    signature of $S(I)$ is the same for all $I\in R_i$.
\end{enumerate}
We continue generating sets~$R_i$ in this manner until~$|R^*_{i}| < r+1$, at which point the construction finishes. Note that if $R_{\rho-1}$ consists of exactly $r+1$ intervals, we cannot apply the above construction as it is. In this case, $R_{\rho}$ consists of a single interval which starts at the left endpoint of~$I_1$ and ends at~$2n-1$, that is the new interval contains every other
interval.

By construction, for each interval $I$ in $R_i$, for $i \in \{1, \ldots, \rho\}$, and for each $j \in \{1,\ldots,i\}$ there is a point~$q$ that is contained in a single interval of $R_j$ and in no other interval from $R_1\cup\cdots \cup R_i$. Moreover, the rightmost point of $I$ is not contained in any interval from~$R_1\cup\cdots \cup R_i$. Furthermore, since for each newly inserted interval $I$ \alg can only recolor $r$ of the intervals that $I$ intersects, there exists a set of intervals $I_1, \cdots, I_{i-1}$ such that for every $I_j$ in this set, $I_j \in R_j$, $I_j$ is contained in $I_{j+1}$ ($I_i = I$), and $I_j$ is not recolored by \alg. Hence, these properties remain valid during the insertion process.

%-----------------------------CLAIM--------------------------------
\myclaim{%
For all $1\leq i\leq \rho$ we have $|R_i| \geq n/(r+2)^i -2$.%
}
%------------------------------------------------------------------
{%
Since \alg colors all intervals of $R_1$ using the same color, we have $|R_1| = n/2$.
Moreover, since for $i>1$ the number of intervals we insert is $\floor{|R_{i-1}|/(r+2)}$, we have that
$|R_i| \geq \floor{|R_{i-1}|/(r+2)}$.
We thus obtain the following
recurrence:
\begin{equation*}
|R_i| \geqslant \left\lbrace \begin{array}{l l}
\floor{\dfrac{|R_{i-1}|}{r+2}} & \text{if } i > 1, \\[20pt]
\dfrac{n}{2} & \text{if } i=1.
\end{array} \right.
\end{equation*}
We can now prove the result using induction.
\[
\begin{array}{lll}
|R_i| & \geq & \floor{|R_{i-1}|/(r+2)} \\[2mm]
        & \geq & \floor{ \left( \frac{n}{(r+2)^{i-1}} -2 \right) / (r+2)} \\[2mm]
        & \geq & \left(  \left( \frac{n}{(r+2)^{i-1}} -2 \right) / (r+2)  -1 \right) \\[2mm]
        & = & \frac{n}{(r+2)^{i}} - \frac{2}{r+2}  - 1 \\[2mm]
        & > & \frac{n}{(r+2)^{i}} - 2.
\end{array}
\]%
}
%------------------------------------------------------------------

Finally we can derive the desired relation between~$n, c$, and~$r$.
Since~$|R_1|=n/2$ and $|R_{i+1}|<|R_i|/2$ for all~$i=1,\ldots,\rho-1$,
the total number of insertions is less than~$n$.
The construction finishes when $|R_i|<r+1$. Hence, $\rho$,
the total number of rounds, must be such that
\[
 \frac{n/2}{(r+2)^{\rho}} - 2 \leq |R_{\rho}| < r+1,
\]
which implies $\rho > \log_{r+2} (n/(2r+6)) > \log_{r+2}n-2$.
The number of colors used by \alg is at least $\rho$,
since every round uses a different color.
Thus $c>\log_{r+2}n-2$ and so $n\leq (r+2)^{c+2}$, from which the
theorem follows.
\end{proof}
%-----------------------------PROOF-----------------------------#

%%%%%%%%%%%%%%%%%%%%%%%%%%%%--SECTION--%%%%%%%%%%%%%%%%%%%%%%%%%%%%
%------------------------------------------------------------------
\section{Upper bounds for fully-dynamic conflict-free colorings}\label{se:algorithms}
%------------------------------------------------------------------
Next we present algorithms to maintain a conflict-free coloring
for a set $S$ of intervals under insertions and deletions. The algorithms
use the same structure, which we describe first. From now on, we
use $n$ to denote the current number of intervals in~$S$.
\medskip

Let $P$ be the set of $2n$ endpoints of the intervals in~$S$.
(To simplify the presentation we assume that all endpoints are distinct,
but the solution is easily adapted to the general case.)
We will maintain a B-tree on the set~$P$. A B-tree of minimum degree~$t$ on a set of
points in $\reals^1$ is a multi-way search tree in which each internal node
has between $t$ and $2t$ children (except the root, which may have
fewer children) and all leaves are at the same level; see the book by
Cormen~\myetal~\cite[Chapter~18]{clrs-ia-09} for details. Thus each internal
or leaf node stores between $t-1$ and $2t-1$ points from $P$
(again, the root may store fewer points). We denote the set of points
stored in a node $v$ by $P(v) := \{p_1(v),\ldots,p_{n_v}(v)\}$,
where $n_v:= |P(v)|$ and the points are numbered from left to right.
For an internal node $v$ we denote the $i$-th subtree of~$v$, where
$1\leq i\leq n_v+1$, by $\tree_i(v)$. Note that the search-tree property
guarantees that all points in $\tree_i(v)$ lie in the range $(p_{i-1}(v),p_{i}(v))$, where $p_0 = -\infty$ and $p_{n_v+1} = \infty$.

We now associate each interval $I\in S$ to the highest
node~$v$ such that $I$ contains at least one of the points in~$P(v)$,
either in its interior or as one of its endpoints. Thus our
structure is essentially an interval tree~\cite[Chapter~10]{bcko-cgaa-08}
but with a B-tree as underlying tree structure.  We denote the
set of intervals associated to a node~$v$ by $S(v)$.
Note that if $\lev(v)=\lev(w)=\ell$, for some nodes $v\neq w$,
and $I\in S(v)$ and $I'\in S(w)$,
then $I$ and $I'$ are separated by a point $p_i(z)$ of some node~$z$
at level~$m < \ell$. Hence, $I\cap I'=\emptyset$.

We partition $S(v)$ into subsets $S_1(v),\ldots,S_{n_v}(v)$ such that
$S_i(v)$ contains all intervals $I\in S(v)$ for which
$p_i(v)$ is the leftmost point from $P(v)$ contained in~$I$.
From each subset $S_i(v)$ we pick at most two \emph{extreme intervals}:
the \emph{left-extreme interval}~$I_{i,\myleft}(v)$ is the interval
from $S_i(v)$ with the leftmost left endpoint, and the
\emph{right-extreme interval}~$I_{i,\myright}(v)$ is
the interval from~$S_i(v)$ with the rightmost right endpoint.
Since all intervals from~$S_i(v)$ contain the point~$p_i(v)$,
every interval from~$S_i(v)$ is contained in
$I_{i,\myleft}(v) \cup I_{i,\myright}(v)$. Note that it
may happen that $I_{i,\myleft}(v) = I_{i,\myright}(v)$.
Finally, we define
$\Sextr(v) := \bigcup_{i=1}^{n_v} \{I_{i,\myleft}(v), I_{i,\myright}(v) \}$
to be the set of all extreme intervals at~$v$.

Our two coloring algorithms both maintain a coloring with the following properties.
\begin{description}
\item[(A.1)] For each level~$\ell$ of the tree~$\tree$, there is a set $C(\ell)$ of colors
      such that these color sets are disjoint between different levels.
\item[(A.2)] For each node $v$ at level $\ell$ in $\tree$, the intervals from $\Sextr(v)$
      are colored locally conflict-free using colors from $C(\ell)$ and
      a universal dummy color. Here
      \emph{locally conflict-free} means that the coloring of~$\Sextr(v)$
      is conflict-free if we ignore all other intervals.
      \item[(A.3)] All non-extreme intervals receive a universal
      \emph{dummy color},
      which is distinct from any of the other colors used,
      that is, the dummy color is not in $C(\ell)$ for
      any level $\ell$.
\end{description}
The two coloring algorithms that we present differ in the size
of the sets~$C(\ell)$ and in which local coloring algorithm is used for the
sets~$\Sextr(v)$.
It is not hard to show that the
properties above guarantee a conflict-free coloring.
%---------------------------------------------------------------------------------
\begin{lemma}\label{le:interval-tree}
Any coloring with properties (A.1)--(A.3) is conflict free and uses
at most $1+\sum_{\ell} |C(\ell)|$ colors.
\end{lemma}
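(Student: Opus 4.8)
The plan is to dispose of the color count first (which is immediate) and then concentrate on conflict-freeness, which is the real content. For the count, observe that by (A.3) there is a single dummy color, and every other color used belongs to some set $C(\ell)$; since the sets $C(\ell)$ are pairwise disjoint by (A.1), the number of distinct non-dummy colors is exactly $\sum_\ell |C(\ell)|$, giving the bound $1 + \sum_\ell |C(\ell)|$.

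For conflict-freeness, I would fix an arbitrary point $q \in \reals$ with $S(q) \neq \emptyset$ and exhibit a single interval in $S(q)$ whose color is unique within $S(q)$. First pick any node $v$ whose associated set $S(v)$ contains an interval through $q$; such a node exists because every interval is associated to exactly one node. Let $\ell := \lev(v)$. The key local observation is that $\Sextr(v)$ already contains an interval through $q$: if $I \in S(v)$ contains $q$ and lies in the subset $S_i(v)$, then $I \subseteq I_{i,\myleft}(v) \cup I_{i,\myright}(v)$, so $q$ lies in at least one of these two extreme intervals. Hence the set $\Sextr(v,q)$ of extreme intervals at $v$ containing $q$ is nonempty, and by property (A.2) the locally conflict-free coloring of $\Sextr(v)$ yields an interval $J \in \Sextr(v,q)$ whose color $c \in C(\ell)$ is unique among $\Sextr(v,q)$.

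It then remains to lift this local uniqueness to global uniqueness within $S(q)$. Suppose some $J' \in S(q)$ with $J' \neq J$ also has color $c$. Since $c \in C(\ell)$ and the dummy color lies in no $C(\ell)$, the interval $J'$ is an extreme interval at some level, and by the disjointness of the color sets (A.1) that level must be exactly $\ell$. Now I invoke the separation property recorded in the construction: two intervals associated to distinct nodes at the same level are disjoint. Since $J$ and $J'$ both contain $q$ and both are associated to level-$\ell$ nodes, they must be associated to the same node, namely $v$. Therefore $J' \in \Sextr(v,q)$ has color $c$ and differs from $J$, contradicting the local uniqueness of $c$. Hence $c$ is unique within $S(q)$, and the coloring is conflict-free.

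The main obstacle, such as it is, lies in this last lifting step: one must combine three structural facts in the right order—the covering of each $S_i(v)$ by its two extreme intervals (to guarantee $\Sextr(v,q) \neq \emptyset$), the level-disjointness of the color palettes (to pin any repeated color $c$ to level $\ell$), and the same-level node-disjointness of the underlying interval tree (to pin it further to the node $v$). Each fact is elementary, so the proof is mostly careful bookkeeping rather than a deep argument; the care needed is in verifying that no interval of $S(q)$ outside $v$ can reuse the color $c$.
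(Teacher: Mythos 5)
Your proof is correct and follows essentially the same route as the paper's: establish that some extreme interval at the relevant node $v$ contains $q$ (via the covering of each $S_i(v)$ by its two extreme intervals), apply local conflict-freeness (A.2) to get a locally unique color, and rule out any global conflict by combining the palette disjointness (A.1), the dummy-color rule (A.3), and the same-level separation property of the interval-tree structure. The only cosmetic difference is that you spell out the lifting step as an explicit contradiction argument, whereas the paper states it directly; the content is identical.
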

Next we describe two algorithms based on this general framework:
one for the easy case where the interval endpoints come from
a finite universe, and one for the general case.

%---------------------------------------------------------------------------------
\paragraph{Solutions for a polynomially-bounded universe.}
%---------------------------------------------------------------------------------
The framework above uses a B-tree on the interval endpoints.
If the interval endpoints come from a universe of size~$U$---for concreteness,
assume the endpoints are integers in the range $0,\ldots,U-1$---then
we can use a B-tree on the set~$\{0,\ldots,U -1\}$. Thus
the B-tree structure never has to be changed. %---------------------------------------------------------------------------------
\begin{theorem}\label{th:bounded-universe}
Let $S$ be a dynamic set of intervals whose endpoints are integers in~$\{0,\ldots,U-1\}$.
\begin{enumerate}
\item[(i)] We can maintain a conflict-free coloring
    on~$S$ that uses $O(\log U)$ colors and that performs at most two
    recolorings per insertion and deletion.
\item[(ii)] For any $t$ with $2\leq t\leq U$, we can maintain a conflict-free coloring
    on~$S$ that uses $O(\log_t U)$ colors and performs $O(t)$
    recolorings per insertion and deletion.
\end{enumerate}
\end{theorem}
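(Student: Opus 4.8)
The plan is to instantiate the general framework on a B-tree over the fixed universe $\{0,\dots,U-1\}$ and to reduce both parts to a local coloring of the extreme sets $\Sextr(v)$, handled one node at a time. Three structural facts drive everything. First, a B-tree of minimum degree $t$ on $U$ keys has height $O(\log_t U)$ and stores at most $2t-1$ points per node, so $|\Sextr(v)| \leq 2(2t-1) = O(t)$. Second, as already observed in the setup, any two intervals of the same color that are associated to distinct nodes on the same level are disjoint; hence a single palette $C(\ell)$ may be reused by all nodes on level $\ell$, and it suffices to color each node locally conflict-free. Third, and crucially, because the universe (and therefore the B-tree) is fixed, the node to which an interval is associated depends only on its own endpoints: inserting or deleting an interval $I$ changes $S(v)$ for exactly one node $v$ and leaves every other $S(w)$ untouched. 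Consequently all recoloring can be confined to $\Sextr(v)$, and by Lemma~\ref{le:interval-tree} the coloring is conflict-free as soon as each $\Sextr(v)$ is colored locally conflict-free with colors from $C(\lev(v))$ while all non-extreme intervals receive the dummy color, so that properties (A.1)--(A.3) hold.

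For part (i) I would take $t=2$, so each node has at most three points and hence at most $2\cdot 3 = 6$ extreme slots $I_{i,\myleft}(v), I_{i,\myright}(v)$. I assign each slot a fixed, distinct color from a six-color palette $C(\ell)$ shared by all nodes on level $\ell$; every extreme interval inherits the color of its slot, and non-extreme intervals get the dummy color. Local conflict-freeness is then immediate: all extreme intervals at a node carry distinct colors, so every covering interval is uniquely colored within the node. The total number of colors is $1 + 6\cdot O(\log_2 U) = O(\log U)$. Because the slot colors are fixed, an update can change only the occupants of the slots of the single affected group $S_i(v)$ --- its left-extreme and right-extreme slot --- so at most two already-colored intervals change color (an interval leaving the extreme set is demoted to the dummy color, and on a deletion its replacement is promoted from the dummy color to the slot color). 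This yields the claimed at most two recolorings.

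For part (ii) the per-level palette must be of constant size rather than $O(t)$, so instead of distinct slot colors I would color the $O(t)$ extreme intervals of each node with the simple static three-coloring of intervals, drawing from a three-color palette $C(\ell)$ reused across level $\ell$. This gives $1 + 3\cdot O(\log_t U) = O(\log_t U)$ colors. Since each update alters the extreme set of only the one affected node, I recompute its local coloring from scratch, recoloring at most $|\Sextr(v)| = O(t)$ intervals.

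The routine-but-delicate part is the bookkeeping for maintaining the partition into the sets $S_i(v)$ and their extreme intervals under updates. The main obstacle is the case analysis behind the ``at most two recolorings'' bound of part (i): one must verify that an insertion or deletion perturbs the left-/right-extreme occupants of at most one group, correctly handling the degenerate situations where a single interval is simultaneously $I_{i,\myleft}(v)$ and $I_{i,\myright}(v)$ (so that splitting or merging these coincident extremes still costs at most two recolorings) and the boundary cases at the ends of a group. For part (ii) the analogous obstacle is lighter --- confirming that the static three-coloring is conflict-free on $\Sextr(v)$ and that palette reuse across a level is justified by the same-level disjointness fact --- but it still requires care with ties in the definitions of the extreme intervals.
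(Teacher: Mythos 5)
Your proposal is correct and takes essentially the same route as the paper: a fixed B-tree over the universe $\{0,\ldots,U-1\}$ so that each update touches exactly one node, distinct colors for the at most $4t-2$ extreme intervals with $t=2$ for part (i) (yielding at most two recolorings, since only the displaced or newly promoted extremes of one group $S_i(v)$ change color), and a constant-size per-level palette with a from-scratch local recoloring of $\Sextr(v)$ for part (ii). The only cosmetic differences are your fixed ``slot'' colors in part (i) versus the paper's per-interval distinct colors, and your appeal to the generic static three-coloring of intervals in part (ii) where the paper explicitly describes the two-color chain method (chain intervals alternating red/blue, non-chain intervals dummy), which serves the same purpose.
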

%---------------------------------------------------------------------------------
\begin{proof}
For both results we use the general framework described above.
\begin{enumerate}
\item[(i)]
    We pick $|C(\ell)|=4t-2$ for all levels~$\ell$. Since a B-tree
    of minimum degree~$t$ on $U$~points has height~$O(\log_t U)$, the total number of colors
    is $O(t\log_t U)$. With $4t-2$~colors, we can give each interval in $\Sextr(v)$
    its own color, so the coloring is locally conflict free. To get $O(\log U)$
    colors we now pick~$t=2$.
    To insert an interval~$I$, we find the set~$S_i(v)$ into which $I$ should be inserted,
    and check if $I$ should replace the current left-extreme and/or right-extreme
    interval in~$S_i(v)$. Thus an insertion requires at most two recolorings.
    Deletions can be handled in a similar way: if $I$ is an extreme interval,
    then we have to recolor at most two intervals that become extreme
    after the deletion of~$I$.
\item[(ii)]
    We pick $|C(\ell)|=2$ for all levels~$\ell$, yielding $O(\log_t U)$ colors in total.
    We now color $\Sextr(v)$ using the chain method
    with~$C(\lev(v))$ and the global dummy color.
    After coloring every connected component of $\Sextr(v)$ in this manner
    we have a locally conflict-free coloring of~$\Sextr(v)$. Insertions and deletions
    into $S(v)$ are simply handled by updating~$\Sextr(v)$ and recoloring~$\Sextr(v)$
    from scratch. Since at most two intervals start or stop being extreme
    and $|\Sextr(v)|\leq 4t-2$, we use $O(t)$ recolorings.
\end{enumerate}
\vspace{-1.5\baselineskip}
\end{proof}
%---------------------------------------------------------------------------------

When~$U$ is polynomially bounded in $n$---that is, $U=O(n^k)$ for some
constant~$k$---this gives very efficient coloring schemes.
In particular, we can then get $O(\log n)$ colors with at most two recolorings
using method~(i), and we can get $O(1/\eps)$ colors with $O(n^{\eps})$ recolorings
(for any fixed~$\eps>0$) by setting $t=U^{\eps/k}$ in method~(ii).

Note finally that we do not need to explicitly store the whole
tree as it is enough to compute the location of any node
when needed, yielding a linear space complexity.

%---------------------------------------------------------------------------------
\paragraph{A general solution.}
%---------------------------------------------------------------------------------
If the interval endpoints do not come from a bounded universe then we cannot
use a fixed tree structure. Next we explain how to deal with this
when we apply the method from Theorem~\ref{th:bounded-universe}(ii),
which colors the sets $\Sextr(v)$ using the chain method:
we take the interval with the leftmost left endpoint, and color it blue.
Then, among all intervals whose left endpoint lies in the blue interval,
we pick the one with the rightmost right endpoint and color it red.
We then repeat this process, alternating between blue and red,
until we reach the rightmost interval.
Finally, we color all uncolored intervals grey.
\medskip

Suppose we want to insert a new interval~$I$ into the set~$S$. We first insert
the two endpoints of $I$ into the B-tree~$\tree$. Inserting an endpoint~$p$
can be done in a standard manner. The basic operations for an insertion are
(i) to split a full node and (ii) to insert a point into a non-full leaf node.

Splitting a full node~$v$ (that is, a node with $2t-1$ points) is done by moving
the median point into the parent of~$v$, creating a node containing the
$t-1$~points to the left of the median and another node containing the
$t-1$~points to the right of the median. Note that this means that
some intervals from $S(v)$ may have to be moved to $S(\parent(v))$.
Thus splitting a node~$v$ involves recoloring intervals in
$S(v)$ and $S(\parent(v))$. Observe that an interval only needs to be recolored
if it was extreme before or after the change. Hence, we recolor
$O(t)$ intervals when we split a node~$v$.

Since an insertion splits only nodes on a root-to-leaf
path and the depth of~$\tree$ is $O(\log_t n)$, the total number of
recolorings due to node splitting is $O(t \log_t n)$.
Moreover, inserting a point into a non-full leaf node
only takes $O(t)$ recolorings. We conclude that an insertion
performs $O(t \log_t n)$ recolorings in total.
For deletions the argument is similar.
Since recoloring at a single node induces $O(t)$ recolorings,
the total number of recolorings is~$O(t \log_t n)$.

%---------------------------------------------------------------------------------
\begin{theorem}\label{th:general-upper-bound}
Let $S$ be a dynamic set of intervals.
\begin{enumerate}
\item[(i)] For any fixed $t\geq 2$ we can maintain a conflict-free coloring on~$S$
  that uses $O(\log_t n)$ colors and that performs $O(t \log_t n)$ recolorings
  per insertion and deletion, where $n$ is the current number of intervals in~$S$.
  In particular, we can maintain a conflict-free coloring
  with $O(\log n)$ colors using $O(\log n)$ recolorings per update.
\item[(ii)] For any fixed $\eps>0$ we can maintain a conflict-free coloring on~$S$
  that uses $O(1/\eps)$ colors and that performs $O(n^{\eps}/\eps)$ recolorings
  per insertion or deletion. The bound on the number of recolorings is amortized.
\end{enumerate}
\end{theorem}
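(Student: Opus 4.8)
**

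The plan is to derive both parts of Theorem~\ref{th:general-upper-bound} directly from the analysis already carried out in the text, which establishes that a single insertion or deletion triggers $O(t\log_t n)$ recolorings while the structure maintains a coloring with properties (A.1)--(A.3). By Lemma~\ref{le:interval-tree}, such a coloring is conflict-free and uses $1+\sum_\ell |C(\ell)|$ colors. Since we color each $\Sextr(v)$ with the chain method (as in Theorem~\ref{th:bounded-universe}(ii)), we have $|C(\ell)|=2$ for every level, and a B-tree of minimum degree~$t$ on $2n$ endpoints has depth $O(\log_t n)$. Therefore the total color count is $O(\log_t n)$, and the per-update recoloring cost is $O(t\log_t n)$ as argued in the preceding paragraphs on node splitting, merging, and point swapping. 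Part~(i) follows immediately, and the specialization to $O(\log n)$ colors and $O(\log n)$ recolorings comes from setting $t=2$.

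For part~(ii), I would set $t:=\lceil n^{\eps}\rceil$ so that $\log_t n = O(1/\eps)$, giving $O(1/\eps)$ colors and, per update, $O(t\log_t n)=O(n^{\eps}/\eps)$ recolorings. The subtlety here is that $t$ now depends on~$n$, which changes as intervals are inserted and deleted, whereas the framework was described for a fixed degree~$t$. The main obstacle will be reconciling a varying minimum degree with the B-tree invariants: a B-tree's degree parameter is fixed at construction, so as $n$ grows or shrinks across a range where $\lceil n^\eps\rceil$ changes, the target degree drifts and the node-size invariants must be re-established.

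To handle this I would use a standard rebuilding (global rebuild) argument. I would fix the degree at $t=\lceil n_0^\eps\rceil$ for the current ``epoch,'' where $n_0$ is the number of intervals at the start of the epoch, and rebuild the entire structure from scratch whenever $n$ has doubled or halved relative to~$n_0$. Between rebuilds the degree~$t$ stays within a constant factor of $n^\eps$, so the color bound $O(1/\eps)$ and the worst-case per-update bound $O(n^\eps/\eps)$ both hold. A rebuild touches all $O(n)$ intervals and costs $O(n\log_t n)=O(n/\eps)$ recolorings, but since a rebuild happens only once per $\Theta(n)$ updates, its amortized contribution is $O(1/\eps)$ per update, which is dominated by the $O(n^\eps/\eps)$ term. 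This is exactly why the bound in part~(ii) is stated as amortized.

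The one place demanding care is verifying that the chain-method coloring of each $\Sextr(v)$ remains valid and that recoloring during a split or merge stays $O(t)$ even as the degree varies within the allowed range; but since both properties were established for a generic~$t$ and each node holds $O(t)$ extreme intervals, no new argument is needed beyond the epoch bookkeeping. I therefore expect the amortization over rebuilds to be the only genuinely new ingredient, with everything else following from the framework and Lemma~\ref{le:interval-tree}.
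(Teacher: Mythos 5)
Your proposal is correct and follows essentially the same route as the paper: part~(i) is read off from the preceding split/merge analysis of the B-tree framework with the chain method, and part~(ii) uses exactly the paper's epoch-based global rebuilding, fixing $t$ from the size at the last rebuild and rebuilding when $n$ doubles or halves, with the rebuild cost amortized over the intervening updates. Your accounting of the rebuild cost ($O(n/\eps)$ amortizing to $O(1/\eps)$ per update) is slightly looser than the paper's charge of at most two recolorings per update, but it is valid and yields the same amortized bound.
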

%---------------------------------------------------------------------------------
The idea behind part (ii) is to use a $t$ with $n^{\eps}/2\leq t\leq 2n^{\eps}$.  This causes the bound in (ii) to be amortized,
since now we need to change $t$ when $n$ has halved or doubled.

We have not been able to efficiently generalize the first method of
Theorem~\ref{th:bounded-universe} to an unbounded
universe. The problem is that splitting a node~$v$
may require many intervals in $\Sextr(v)$ to be recolored,
since many intervals may be moved to $\parent(v)$. Hence, the method
would use the same number of recolorings as the chain method, but more colors.

%---------------------------------------------------------------------------------
\paragraph{Bounded-length intervals.}
%---------------------------------------------------------------------------------
Next we present a simple method that allows us to improve the bounds when the
intervals have length between~$1$ and $L$ for some constant $L>1$.
%---------------------------------------------------------------------------------
\begin{theorem}\label{th:small-intervals}
Let $S$ be a dynamic set of intervals with lengths in the range $[1,L)$ for
some fixed $L>1$.
Suppose we have a dynamic conflict-free coloring algorithm for a general set of
intervals that uses at most $c(n)$~colors and at most $r(n)$~recolorings for any
insertion or deletion. Then we can obtain a  dynamic conflict-free coloring algorithm
on $S$ that uses at most $2\cdot c(2L)+1$ colors and at most $2\cdot r(2L)+1$ recolorings
for any insertion or deletion.
\end{theorem}
%---------------------------------------------------------------------------------
\begin{proof}
For each pair of integers $i,j$ so that $i\geq 0$ and $0\leq j<L$ let $x_{i,j} := iL+j$, and for each $i\geq 0$ define  $X_i := \{x_{i,j} : 0\leq j<L\}$.
Note that the point sets $X_i$ form a partition of the non-negative integer points in $\Reals^1$
into subsets of $L$~consecutive points.
We assign each interval $I\in S$ to the leftmost point~$x_{i,j}$ contained in it, and we
denote the subset of intervals assigned to a point in $X_i$ by $S_i$.
We will color each subset~$S_i$ separately. Since an interval in~$S_i$
can intersect only intervals in~$S_{i-1}\cup S_{i}\cup S_{i+1}$,
we can use the same color set for all $S_i$ with $i$ even, and the same color
set for all $S_i$ with $i$ odd. It remains to argue that we can maintain a coloring for~$S_i$ using
$c(2L)$ colors (in addition to a global dummy color) and $2\cdot r(2L)+1$ recolorings.

Let $\Sextr(x_{i,j})$ contain the two extreme intervals assigned to~$x_{i,j}$:
the interval sticking out furthest to the left and, among the
remaining intervals, the one sticking out furthest to the right.
We give each non-extreme interval the dummy color
and maintain a conflict-free coloring for~$\Sextr(i) := \bigcup_j \Sextr(x_{i,j})$.
When we insert a new interval into $S_i$ it replaces at most one
extreme interval, so we do at most one insertion and one deletion in~$\Sextr(i)$.
Thus we do at most $2 \cdot r(2L)+1$ recolorings. (The +1 is because the interval
that becomes non-extreme is recolored to the dummy color.)
For deletions the argument is similar.
\end{proof}
%---------------------------------------------------------------------------------
For instance, by applying Theorem~\ref{th:general-upper-bound}(i) we can maintain
a coloring with $O(\log L)$ colors and $O(\log L)$ recolorings. We can also
plug in a trivial dynamic algorithm with $c(n)=n$ and $r(n)=0$ to obtain $4L+1$ colors
with only 1~recoloring per update;
when $L$ is sufficiently small this gives a better result.

%%%%%%%%%%%%%%%%%%%%%%%%%%%%--SECTION--%%%%%%%%%%%%%%%%%%%%%%%%%%%%
%------------------------------------------------------------------
\section{Online conflict-free colorings} \label{se:online}
%------------------------------------------------------------------
In this section we consider recolorings to be too expensive and hence forbid them all together.
In this setting, any conflict-free coloring must be a proper coloring if we allow deletions:
if any point is contained in two intervals of the same color, then deleting all other intervals
would result in a conflict. Thus we would need $n$ colors in the worst case. We therefore
only allow insertions, that is, we study the online version.
Note that Theorem~\ref{thm_lowerbound} does not cover the case~$r=0$. 
However, Abam~\etal~\cite{Abam2014} proved that there is a logarithmic lower
bound on the number of colors for the online case. For completeness we provide the proof.

%-----------------------------LEMMA--------------------------------
\begin{theorem}[Abam~\etal~\cite{Abam2014}]
For each~$n>0$, there is a sequence of~$n$ intervals for which
any online conflict-free coloring requires at least~$\lfloor\log_2 n\rfloor +1$ colors.
\end{theorem}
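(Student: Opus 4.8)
My plan is to reduce the statement to a clean one-dimensional coloring problem and then force the bound with a recursive, oblivious construction. First I would restrict attention to nested intervals and place all their right endpoints far to the right, in a tiny cluster beyond every left endpoint that will ever be used. Then an interval is determined by its left endpoint, the containment order equals the left-to-right order of the left endpoints, and the set of intervals containing a point $q$ is exactly the set of intervals whose left endpoint lies to the left of $q$, i.e.\ a \emph{prefix} in the left-endpoint order. Hence a coloring is conflict-free precisely when, for every prefix $P_j$ consisting of the $j$ leftmost intervals, some color occurs exactly once in $P_j$. Because the coloring is online with no recoloring, this prefix condition must hold after every insertion, and inserting a new nested interval amounts to inserting a new left endpoint at an arbitrary position, possibly between two existing ones. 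So it suffices to exhibit a fixed insertion sequence of $n$ points for which every online, no-recolor coloring that satisfies the prefix condition at all times uses at least $\lfloor \log_2 n\rfloor + 1$ colors.

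The core is to show that a suitable sequence $\sigma_k$ of $2^{k-1}$ points forces $k$ colors; since $\lfloor \log_2 n\rfloor + 1 = k$ whenever $2^{k-1}\leq n < 2^{k}$, and padding such a sequence up to $n$ nested intervals can only increase the number of colors required, this yields the theorem. I would prove the $2^{k-1}\to k$ bound by induction on $k$, building $\sigma_k$ recursively: $\sigma_1$ is a single point, and $\sigma_{k+1}$ is obtained by splitting the current empty gap into a left part and a right part, inserting a shifted copy of $\sigma_k$ into the left part first and then a shifted copy of $\sigma_k$ into the right part. The left copy forces $k$ colors on its own points. The intended effect is that, once the left copy is colored, its coloring is frozen and is prepended as a fixed \emph{baseline} to every prefix that reaches into the right copy; the construction is arranged so that the right copy can no longer reuse the colors that made its prefixes conflict-free in isolation, and the prefix condition then forces a $(k+1)$-st color somewhere in the right copy. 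The point count satisfies $|\sigma_{k+1}| = 2\,|\sigma_k|$, so $|\sigma_k| = 2^{k-1}$, as required.

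The main obstacle is formulating the induction hypothesis strongly enough that the right copy really is forced to spend a fresh color. The naive hypothesis, namely that $\sigma_k$ placed in an empty gap forces $k$ colors, does not compose: the prefixes inside the right copy are no longer prefixes of its own points alone, since each of them contains the entire left copy as an already-colored prepended baseline, which can supply the required unique color for free. I therefore expect to strengthen the statement to one that is robust under a prepended baseline, together with control over \emph{which} color is the unique one in the full prefix of $\sigma_k$, so that the two halves interlock correctly. Getting this invariant exactly right, and verifying that it is preserved by the split-and-prepend step while keeping the count at precisely $2^{k-1}$ rather than something larger, is the delicate part; the remaining bookkeeping (the coordinate shift of the two copies, and handling an $n$ that is not a power of two by working with the largest $2^{k-1}\leq n$) is routine.
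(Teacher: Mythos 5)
Your reduction to the prefix problem is sound and matches the paper's setup (the paper also works with nested intervals), but your recursive construction fails, and not merely for the reason you flag. Unrolling $\sigma_{k+1}$ (left copy of $\sigma_k$ first, then right copy) shows that $\sigma_k$ inserts its $2^{k-1}$ points in left-to-right spatial order, i.e.\ each new nested interval is placed \emph{inside} all previous ones. That insertion order is $2$-colorable online for every $k$: give the first point color~1, the second point color~2, and every later point color~1. At any time the realized sets are exactly the prefixes $\{p_1,\dots,p_j\}$ of the current left-to-right order, and every such prefix with $j\geq 2$ contains color~2 exactly once, while the prefix $\{p_1\}$ contains color~1 once. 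So the prefix condition holds at all times with two colors and no recoloring, which means no strengthening of your induction hypothesis can rescue this insertion order: the ``prepended baseline'' difficulty you identify is not bookkeeping to be engineered away, it is an actual proof that the left-copy-first order forces no more than two colors.

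The fix is to reverse the nesting: each new interval must \emph{contain} all previous ones (in your language, insert the right copy of $\sigma_k$ first and the left copy second; equivalently, insert the left endpoints right to left). Then at time $j$ the realized prefixes are $\{I_i,\dots,I_j\}$ for every $i\leq j$, so over the whole sequence \emph{every} index interval $\{i,\dots,j\}$ occurs as some $S(q)$ at some time, and the online no-recoloring requirement becomes the full \emph{static} requirement that every such set contain a uniquely occurring color. This is precisely the paper's construction ($I_i=[-i,i]$ inserted at time $i$), and once you have it the theorem needs no new inductive invariant at all: it follows either by citing the known static lower bound of $\lfloor\log_2 n\rfloor+1$ for conflict-free coloring $n$ points with respect to intervals (Even~et~al., which is what the paper does), or by the standard halving argument---the whole set has a uniquely colored point $p$, and that color appears nowhere on either side of $p$, so recursing on the larger side produces a fresh color at each step while the set size shrinks by at most about half, yielding $\lfloor\log_2 n\rfloor+1$ distinct colors.
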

%------------------------------------------------------------------
\begin{proof}
We reduce the static conflict-free coloring problem for~$n$ points with respect to
intervals in ~$\reals^1$ to the online problem for coloring intervals with respect to points.
The former problem asks for a coloring of the~$n$ given points,
which we can assume to be the integers~$1,\ldots, n$,
such that for any two points~$p,q$, the subset
of consecutive points between~$p$ and~$q$ has a unique color.
Since this problem requires~$\lfloor\log_2 n\rfloor +1$
colors~\cite{even-cf-03}, this reduction proves the theorem.

We map the instance~$\mathcal{I}\{1,\ldots,n\}$ to an instance~$\mathcal{I}'$ of
the online conflict-free coloring of intervals as follows: for each~$i=1,\ldots, n$,
we insert the interval~$I_i:=[-i, i]$ at time~$i$.
See Fig.~\ref{fig:mapping} for an illustration.

%-----------------------------PICTURE------------------------------
\begin{figure}[h]
\begin{center}
\begin{tikzpicture}
\foreach \i in {1,...,6}{
	\node at (-8+0.5*\i, 0.4) (\i) {};
	\draw[thick] (\i) circle (0.05);
}

\node[below=1mm] at (1) {$1$};
\node[below=1mm] at (6) {$n$};

\draw[thick, arrows={-latex'}] (-4.5,0.4) --++ (1,0) ;

\foreach \h/\l in {0/1, 0.2/1.5, 0.4/2,
					0.6/2.5, 0.8/3, 1/3.5}{
	\draw[thick] (-\l,\h) -- (\l,\h);
}

\draw[thick, ->] (4.5,0) --node[right]{time} (4.5, 1);
\node at (1.3,-0.05) {$I_1$};
\node at (3.8,0.95) {$I_n$};

\node at (-6.25,1.5) {$\mathcal{I}$};
\node at (0,1.5) {$\mathcal{I}'$};
\end{tikzpicture}
\end{center}
\caption{Mapping an instance of $n$ points to one of $n$ intervals.}
\label{fig:mapping}
\end{figure}
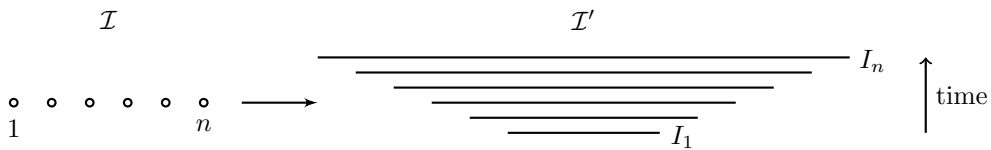
%-----------------------------PICTURE-----------------------------#

We now prove that there is an interval containing exactly the
subset~$S\subseteq \mathcal{I}$
if and only if there is a point~$q$ and a time step~$t$ such that
the point~$q$ is contained in exactly~$S':=\{I_i | p_i \in S \}$ at time~$t$
in~$\mathcal{I}'$.
Let~$i,j$ be such that~$S=\{ i,\ldots, j\}$.
We claim that the query point~$q = i-\frac{1}{2}$ at
time~$j$ is contained in exactly~$S'$; see Fig.~\ref{fig:subset}.

%-----------------------------PICTURE------------------------------
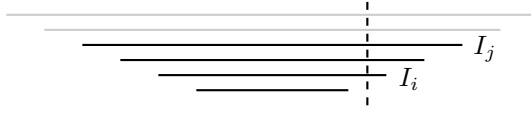
\begin{figure}[h]
\begin{center}
\begin{tikzpicture}
\foreach \h/\l in {0/1, 0.2/1.5, 0.4/2,
					0.6/2.5}{
	\draw[thick] (-\l,\h) -- (\l,\h);
}
\node at (1.8, 0.15) {$I_i$};
\node at (2.8,0.55) {$I_j$};
\foreach \h/\l in {0.8/3, 1/3.5}{
	\draw[thick, color=black!20] (-\l,\h) -- (\l,\h);
}
\draw[thick, dashed] (1.25,-0.2) -- (1.25,1.2);
\end{tikzpicture}
\end{center}
\caption{The intervals $I_i, \ldots, I_j$ that correspond to points $p_i,\ldots, p_j$.}
\label{fig:subset}
\end{figure}
%-----------------------------PICTURE-----------------------------#

First, since~$q$ is
not contained in~$I_{i-1}=[-i+1, i-1]$, and~$I_{i-1}$ contains
all intervals with lower index,~$q$ is not contained in $I_\ell$ for $\ell < i$. Moreover, since~$q$ is contained in~$I_i=[-i, i]$
and~$I_i$ is contained in any higher indexed interval,~$q$ is
also contained in those. Since we are at time~$j$,
only intervals~$I_1,\ldots, I_j$ are present.
Thus, at time~$j$, the point~$q$ is indeed contained in exactly the intervals of~$S'$, which concludes the proof.
\end{proof}
%-----------------------------LEMMA-------------------------------#

Notice that the intervals in the lower-bound construction are \emph{nested},
that is, any two intervals are either disjoint or one contains the other.
We show that for inputs restricted to such nested intervals a simple greedy
algorithm requires only $\lfloor\log_2 n\rfloor +1$ colors. For the case where all nested intervals cover a specific point, Abam~\etal~\cite{Abam2014} provided an algorithm that uses $O(\log n)$ colors.

We label the colors as $0,1,2,3,\ldots$ in the order in which they are introduced.
For each newly inserted interval, the greedy algorithm gives it the dummy color~$0$
if it is contained in another interval; otherwise, it gets the smallest color~$c \geq 1$
that keeps the coloring conflict-free. Next we prove that this greedy algorithm
with a dummy color is optimal on nested intervals.

%-----------------------------LEMMA--------------------------------
\begin{theorem}
The greedy algorithm with a dummy color uses
at most~$\lfloor\log_2 n\rfloor +1$ colors on nested instances.
\end{theorem}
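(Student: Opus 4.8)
The plan is to bound the largest color index that the greedy algorithm ever introduces, and then use the fact that colors are labeled $0,1,2,\dots$ in order of introduction to conclude the bound on the total number of colors. First I would record the structural consequences of working with a nested (laminar) family: the intervals containing any fixed point $q$ form a chain ordered by inclusion, an interval receives a nonzero color exactly when it is \emph{maximal} (not contained in any present interval) at its insertion time, and the algorithm never recolors. For a maximal interval $I$ being inserted, every present interval meeting $I$ lies inside $I$, so assigning a color to $I$ only affects the chains of points $q\in I$. I would make the forcing condition explicit by defining, for each covered point $q$, the witness set $W(q)$ of colors occurring exactly once in the chain of $q$; conflict-freeness is exactly $W(q)\neq\emptyset$ for all covered $q$, and assigning color $c'$ to $I$ fails precisely when some $q\in I$ has $W(q)=\{c'\}$ (the single copy of $c'$ would become a second copy, destroying the only witness). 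Hence greedy uses color $c$ at $I$ only if, for every $c'\in\{1,\dots,c-1\}$, there is a point $q_{c'}\in I$ whose chain has color $c'$ as its \emph{sole} witness.

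The heart of the argument is a doubling lemma, proved by induction on $c$: \emph{whenever greedy assigns color $c$ to an interval $I$, some point inside $I$ is contained in at least $2^{c-1}$ intervals present at that moment.} The base case $c=1$ is trivial. For the inductive step I would use the sole-witness point $q_{c-1}\in I$ for the color $c-1$: its chain contains a unique interval $J$ of color $c-1$, and $J$ itself was forced to take color $c-1$ by the same mechanism, so by the induction hypothesis a point in $J$ is covered by at least $2^{c-2}$ intervals. The second, disjoint group of $2^{c-2}$ intervals should come from the requirement that \emph{every} color below $c-1$ occur a number of times different from one along the chain of $q_{c-1}$: this non-uniqueness cannot be sustained unless the chain also carries a second sub-configuration that (again by the induction hypothesis applied one level down) contributes $2^{c-2}$ further nested intervals around $J$. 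Combining the two groups, which lie in disjoint parts of the chain of $q_{c-1}$, yields a point covered by $2^{c-2}+2^{c-2}=2^{c-1}$ intervals.

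From the doubling lemma, if $c$ is the largest color greedy ever uses then $n\ge 2^{c-1}$, so $c\le\lfloor\log_2 n\rfloor+1$; since brand-new colors always preserve conflict-freeness the nonzero colors used are exactly $\{1,\dots,c\}$, and together with the dummy color this gives the claimed bound after a careful accounting that the dummy contributes the additive constant rather than an extra doubling. An alternative route, worth keeping as a sanity check, is to match the greedy coloring on an innermost-to-outermost chain against the $2$-adic-valuation coloring of $\{1,\dots,n\}$, whose optimality $\lfloor\log_2 n\rfloor+1$ is the classical result of Even~\etal~\cite{even-cf-03} already invoked in the preceding theorem. I expect the main obstacle to be the inductive step of the doubling lemma under an \emph{arbitrary} insertion order: rigorously exhibiting the two disjoint witness-groups, ruling out that they overlap (which would break the doubling), and pinning the constant precisely so that the dummy color lands exactly on the ``$+1$'' and the final count does not drift by an additive term.
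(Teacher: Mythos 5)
Your high-level plan (induct on the color index, double a witness structure, conclude $n\geq 2^{c-1}$ and hence $c\leq\lfloor\log_2 n\rfloor+1$) is the same as the paper's, but your key lemma is false as stated, and it fails exactly at the step you flag as the ``main obstacle.'' The doubling lemma measures the wrong quantity: coverage depth at a single point. Consider the nested instance $K_1=[0,1]$, $J_2=[-1,2]$, $K_1'=[3,4]$, $I=[-2,5]$, inserted in that order. Greedy gives $K_1$ color $1$, then $J_2$ color $2$ (color $1$ conflicts inside $K_1$), then $K_1'$ color $1$, and then $I$ is forced to color $3$: color $1$ fails at points of $K_1'$, and color $2$ fails at points of $J_2\setminus K_1$. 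Yet no point of $I$ is covered by more than $3$ intervals, while your lemma demands $2^{3-1}=4$. The reason is structural: the points forcing the rejection of the different colors lie in \emph{disjoint} sub-configurations ($J_2\supset K_1$ on one side, $K_1'$ on the other), which are siblings in the laminar family, not nested. Two disjoint intervals cannot contain a common point, so the two witness groups can never ``lie in disjoint parts of the chain of $q_{c-1}$''---that phrase is self-contradictory---and depths do not add. The inductive step therefore cannot be repaired in the depth formulation.

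The quantity that does double is the one the paper uses: the \emph{number of intervals contained in} $I$, i.e., the claim that color $i\geq 1$ is only ever used on an interval covering at least $2^{i-1}-1$ other intervals (counts, unlike depths, add over disjoint groups; in the example above, $I$ covers exactly $3=2^{3-1}-1$ intervals). Even with the correct formulation, exhibiting the second disjoint group rigorously needs an idea absent from your sketch. The paper proves a removal claim: deleting $I_{i-1}$ (the unique color-$(i-1)$ interval at the witness point) together with the set $S_{i-1}$ of intervals it contains does not change which colors $k\leq i-2$ are usable for any interval---this exploits nestedness. It then distinguishes two cases: either $I_i$ still cannot take color $i-1$ in the reduced instance, which yields a second color-$(i-1)$ interval inside $I_i$, necessarily disjoint from $I_{i-1}$, to which the induction hypothesis applies; or $I_i$ could take color $i-1$ in the reduced instance, and the induction hypothesis applied to $I_i$ itself in that reduced instance supplies the second group of $2^{i-2}-1$ intervals. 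Your fallback ``sanity check'' via the $2$-adic valuation coloring is not a proof either: greedy under an adversarial insertion order need not reproduce that static coloring, and the lower-bound reference only bounds the optimum, not greedy's output.
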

%------------------------------------------------------------------
\begin{proof}
We prove the following statement:
\begin{quotation}
\noindent For any ordered set $S$ of $n$ nested intervals and
for any color $i\geq 1$ the following holds: if the greedy algorithm assigns color~$i$
to an interval $I\in S$, then $I$ contains least~$2^{i-1}-1$ other intervals from~$S$.
\end{quotation}
If this statement holds, then the algorithm indeed uses
at most~$\lfloor\log_2 n\rfloor +1$ colors.
We prove the statement by induction on the number~$n$ of
intervals.

The case $n=1$ is obvious, so
now consider the case $n>1$ and assume the statement
holds on any instance of less than $n$ intervals.
For any $1\leq j\leq n$, define $S_j := \{I_1,\ldots,I_j\}$ to be the
first $j$ intervals in $S$.
Let $i$ be the color of~$I_n$, the last interval inserted.
Let~$S(I_n)\subseteq S_{n-1}$ be the set of intervals contained in~$I_n$.
We can assume without loss of generality that~$S(I_n)$
does not contain any interval~$I_j$ using color~$i$ or higher
otherwise we can apply the induction hypothesis to~$S_j$.

Since the algorithm did not use colors~$1,\ldots, i-1$ for $I_n$,
there is a query point~$q$ contained in exactly one interval~$I_j$
of color~$i-1$. Note that $I_j \subsetneq I_n$.
Let~$S(I_j)\subseteq S$ be the
set of intervals contained in~$I_j$ (excluding $I_j$ itself).
By the induction hypothesis,~$|S(I_{j})| \geqslant 2^{i-2}-1$.
We now prove that the color of any interval
in~$S(I_n) \setminus \{ S(I_j) \cup I_{j} \}$
is not influenced by~$S(I_j) \cup  \{  I_{j} \} $.

%-----------------------------CLAIM--------------------------------
\myclaim{%
Any interval in~$S(I_n) \setminus \{ S(I_j) \cup \{ I_{j} \} \}$
of color~$k<i$ in the full instance is also colored with~$k$ in the
instance where only~$S(I_n) \setminus \{ S(I_j) \cup \{ I_{j} \}\}$ is
inserted.%
}%
%------------------------------------------------------------------
{%
We show the claim using the following invariant:
each interval of $S(I_n) \setminus \{ S(I_j) \cup \{I_{j} \}\}$ being
colored receives the same color in the restricted
instance as it receives in the full instance.
The invariant obviously holds when no interval
is inserted. Suppose now that the invariant
holds, and let~$I_{\ell}\in S(I_n) \setminus \{ S(I_j) \cup \{I_{j} \}\}$ be the next inserted interval.
Let~$k<i$ be the color of $I_\ell$ in the full instance.
If~$I_\ell \cap I_{j} = \emptyset$, the invariant obviously holds
after the insertion of~$I_\ell$ since it holds before the insertion
of~$I_\ell$ and since no interval in~$ S(I_j) \cup \{ I_{j} \}$
intersects~$I_\ell$.
Suppose therefore that~$I_\ell \supset I_{j}$. In this case,~$k\neq i-1$
since we assumed there is a query point~$q\in I_j$ such that $I_j$ is the only interval
of color~$i-1$ containing~$q$.
Therefore,~$I_{j}$ has no influence on the color of~$I_\ell$.
Moreover, no interval in~$S(I_{j})$ can have color~$i-1$:
suppose there is an interval~$I_m \in S(I_j)$ of color~$i-1$.
Then, for any point~$p\in I_m$, there must be an interval of unique
color smaller than~$i-1$, and therefore one of these intervals,
say~$I'_m$ colored with~$i'<i-1$, must contain~$I_m$. However,
in that case, the color~$i'$ was available when coloring~$I_m$
which contradicts the greedy choice. Thus, no interval in~$S(I_j)$
has color~$i-1$ and hence the color of~$I_\ell$ is only determined
by intervals in~$S(I_n) \setminus \{ S(I_j) \cup I_{j} \}$ and
therefore the invariant is maintained for~$I_\ell$ too.%
}
%-----------------------------CLAIM-------------------------------#

We now have two cases.
\begin{enumerate}
\item By removing~$S(I_j)\cup \{I_{j}\}$, the
interval~$I_n$ still cannot use color~$i-1$ (by the claim, all
the other intervals keep the same colors).
Then there is an interval $I_\ell$ not in~$S(I_j)\cup \{I_{j}\}$,
that uses color~$i-1$ and is contained in~$I_n$, which
by the induction hypothesis contains at
least~$2^{i-2}-1$ other intervals. Since $I_\ell$ cannot contain~$I_{j}$---this follows
from the definition of $I_j$---the
intervals contained in $I_j$ and the ones contained in $I_\ell$
are distinct.
Hence,~$I_n$ contains at least~$(2^{i-2}-1)+(2^{i-2}-1)+2=2^{i-1}$ intervals.
\item By removing~$S(I_j)\cup \{I_{j}\}$, the
interval~$I_n$ can use color~$i-1$ (by the claim, all
the other intervals keep the same colors).
By the induction hypothesis,~$I_n$
contains at least~$2^{i-2}-1$ intervals
in the instance without~$S(I_j)\cup \{I_{j}\}$,
see Fig.~\ref{fig:nested} for an illustration.
By adding back~$S(I_j)$ and~$I_{j}$, we have that~$I_n$ contains
at least~$(2^{i-2}-1)+(2^{i-2}-1)+1=2^{i-1}-1$ intervals.
\end{enumerate}

%-----------------------------PICTURE------------------------------
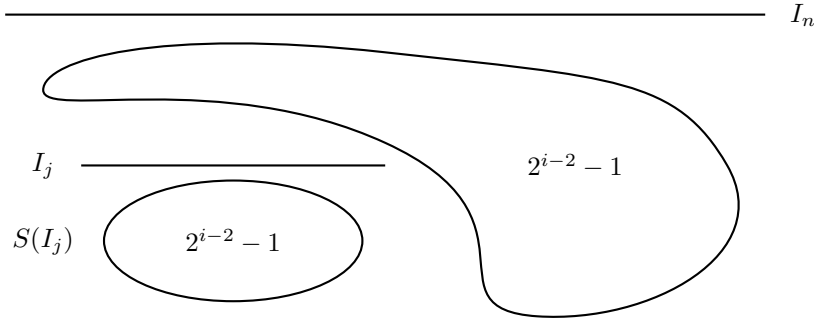
\begin{figure}[h]
\begin{center}
\begin{tikzpicture}
\draw[thick] (-5,0) -- (5,0);
\node at (5.5,0) {$I_n$};

\draw[thick] (-4,-2) -- (0,-2);
\node at (-4.5,-2) {$I_{j}$};

\draw[thick] (-2,-3) ellipse (1.7cm and 0.8cm);
\node at (-2,-3) {$2^{i-2}-1$};
\node at (-4.5,-3) {$S(I_j)$};

\draw[thick] plot [smooth cycle, tension=1.2]  coordinates {
		(-4.5,-1) (0,-0.5) (4.5,-2) (2,-4) (0,-1.7)};
\node at (2.5,-2) {$2^{i-2}-1$};
\end{tikzpicture}
\end{center}
\caption{By removing $S(I_j)$ and $I_{j}$, interval $I_n$ can use color $i-1$.}
\label{fig:nested}
\end{figure}
%-----------------------------PICTURE-----------------------------#
\end{proof}
%-----------------------------LEMMA-------------------------------#

\paragraph{Non-deterministic.}
Our main interest in this paper is in deterministic algorithms.
For the online case, we note that there is a randomized algorithm that
uses~$O(\log n)$ colors with high probability. This follows from
the general technique of Bar-Noy~\etal~\cite{barnoy-ocf-10}, who present
an online algorithm for so-called $k$-degenerate hypergraphs that uses
$O(k\log n)$ colors with high probability. A hypergraph~$H=(V,E)$
is called~\emph{$k$-degenerate} if for every subset~$V'\subseteq V$ of vertices,
and for every ordering~$v_1, \ldots, v_i$ of~$V'$, the following
holds:~$\sum_{j=1}^i \deg_{D[1,\ldots, j]}(v_j) \leqslant k|V'|$,
where~$\deg_{D[1,\ldots, j]}(v_j)$ denotes the degree of~$v_j$
in the Delaunay graph of~$H[v_1,\ldots, v_j]$ (the Delaunay
graph of a hypergraph is the subhypergraph on all vertices
containing the hyperedges of size 2).
To apply their result it suffices
to show that the hypergraph generated by intervals is 2-degenerate,
which can be done using an amortized counting of the number of
new Delaunay edges at each insertion as follows.

\begin{theorem}
Given an online sequence of intervals we can maintain a conflict free coloring of this sequence using $O(\log n)$ colors with high probability.
\end{theorem}
\begin{proof}
To apply the result of Bar-Noy \etal Let~$I_1,\ldots, I_i$ be a sequence of intervals inserted
in that order. Each time an interval is inserted,
each endpoint~$p$ is labelled (or relabelled) 0, 1, or 2, depending if
the region of the real line immediately to the right
of~$p$ is covered by zero, one, or
two or more intervals respectively. Now it suffices to notice
two things: first, the label can only increase over time, second,
each time a Delaunay edge appears, at least one endpoint label
becomes~2 (either a 1 is turned into a 2, or one of the endpoints
of the new interval is immediately labelled 2). Therefore, there
cannot be more Delaunay edges created than the number of
endpoints.
\end{proof}

%%%%%%%%%%%%%%%%%%%%%%%%%%%%--SECTION--%%%%%%%%%%%%%%%%%%%%%%%%%%%%
%------------------------------------------------------------------
\section{Kinetic conflict-free colorings} \label{se:kinetic}
%------------------------------------------------------------------
\newcommand{\chain}{\mathcal{C}}
%------------------------------------------------------------------

In this section we consider conflict-free colorings of a set of intervals
in $\reals^1$ whose endpoints move continuously. Note that we allow the endpoints
of an interval to move independently, that is, we allow the intervals
to expand or shrink over time but we do not allow the right endpoint
of an interval to cross the left endpoint of the same interval.
We show that by using
only three recolorings per event---an event is when two endpoints cross
each other---we can maintain a conflict-free coloring consisting of only
four colors. Our recoloring strategy is
based on the chain method discussed in the introduction.
This method uses three colors: two colors for the chain
and one dummy color. To be able to maintain the coloring in the kinetic setting without
using many recolorings, we relax the chain properties and
we allow ourselves three colors for the chain.
Next we describe the invariants we maintain on the chain and its coloring,
and we explain how to re-establish the invariants when two endpoints cross each other.
In the remainder we assume that the endpoints of the chains are in general position
except at events, and that events do not coincide (that is, we never have three
coinciding endpoints and we never have two events at the same time). These conditions can be removed by using consistent tie-breaking.

%---------------------------------------------------------------------------------
\paragraph{The chain invariants.}
%---------------------------------------------------------------------------------
Let $S$ be the set of intervals to be colored, where all interval endpoints are
distinct. (Recall that we assumed this to be the case except at event times.)
Consider a subset $\chain\subseteq S$ and order the intervals in $\chain$ according
to their left endpoint. We denote the predecessor of an interval $I\in \chain$ in this
order by $\mypred_{\chain}(I)$, and we denote its successor by $\mysucc_{\chain}(I)$.
A \emph{chain} (for $S$) is defined as a subset $\chain$ with the following three properties.
%---------------------------------------------------------------------------------
\begin{description}
\item[(C1)]
    Any interval $I\in\chain$ can intersect only two other intervals in $\chain$,
    namely $\mypred_{\chain}(I)$ and $\mysucc_{\chain}(I)$. 
\item[(C2)]
    Any interval $I \in S\setminus \chain$ is completely covered by the intervals in
    $\chain$.  
\item[(C3)] No interval $I \in \chain$ is fully contained in any other interval~$I'\in S$.
\end{description}
%---------------------------------------------------------------------------------
Now consider a set $S$ and a chain $\chain$ for~$S$. We maintain the
following \emph{color invariant}:
each interval $I\in \chain$ has a non-dummy color and this color is different
from the color of $\mysucc(I)$, and each interval in $S\setminus\chain$ has
the dummy color.
%---------------------------------------------------------------------------------
\begin{lemma}\label{lem:kin-cc-coloring}
Let $S$ be a set of intervals and $\chain$ be a chain for~$S$. Then any
coloring of $S$ satisfying the color invariant is conflict-free.
\end{lemma}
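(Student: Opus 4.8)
The plan is to fix an arbitrary point $q$ that is covered by $S$ and exhibit an interval containing $q$ whose color is unique among all intervals of $S$ that contain $q$; this is exactly the conflict-free condition. The first step is to reduce everything to the chain $\chain$. By (C2) every interval in $S\setminus\chain$ is contained in $\bigcup_{I\in\chain} I$, so $\bigcup_{I\in S} I = \bigcup_{I\in\chain} I$; in particular the covered point $q$ lies in at least one interval of $\chain$. Moreover, by the color invariant every interval of $S\setminus\chain$ carries the dummy color, which is distinct from every non-dummy color used on $\chain$. Hence a non-dummy color borne by exactly one chain interval containing $q$ is automatically unique among all intervals of $S$ containing $q$, and the problem reduces to understanding which chain intervals contain $q$.

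The second step is a counting argument showing that $q$ lies in at most two intervals of $\chain$, and that if it lies in exactly two then they are consecutive in the left-endpoint order on $\chain$. Suppose three chain intervals contained $q$, and order them as $J_a, J_b, J_c$ by left endpoint, with $\ell(J_a)<\ell(J_b)<\ell(J_c)$. Then $J_a$ and $J_c$ intersect (both contain $q$), so by (C1) one is the predecessor or successor of the other in $\chain$; since $\ell(J_a)<\ell(J_c)$ this forces $J_c = \mysucc_{\chain}(J_a)$, i.e.\ no chain interval lies strictly between $J_a$ and $J_c$ in the ordering. But $J_b$ has its left endpoint strictly between those of $J_a$ and $J_c$, a contradiction. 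The same appeal to (C1) shows that two chain intervals containing $q$ must be adjacent, namely one equals $\mysucc_{\chain}$ of the other.

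The final step combines these observations with the color invariant. If $q$ lies in exactly one chain interval $I$, then $I$'s non-dummy color is unique among all intervals containing $q$ (all others have the dummy color), and we are done. If $q$ lies in exactly two chain intervals $I$ and $\mysucc_{\chain}(I)$, the color invariant guarantees these two carry distinct non-dummy colors; hence each of these colors is borne by exactly one interval containing $q$, again giving a uniquely colored interval. Since every covered $q$ falls into one of these two cases, the coloring is conflict-free.

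I expect the only real subtlety to be the counting step—pinning down that (C1) together with the left-endpoint ordering forbids a third containing interval and forces adjacency—rather than the color bookkeeping, which is immediate once the dummy color is separated from the chain colors. It is worth noting that this argument uses only (C1), (C2), and the color invariant; property (C3) appears not to be needed for conflict-freeness itself and presumably serves only to keep the chain maintainable under kinetic events.
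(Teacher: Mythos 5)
Your proof is correct and follows essentially the same route as the paper's: (C2) ensures every covered point lies in a chain interval, (C1) limits the point to at most two consecutive chain intervals, and the color invariant (distinct non-dummy colors on consecutive chain intervals, dummy elsewhere) yields the unique color. The only difference is that you spell out the left-endpoint-ordering argument behind the ``at most two, and consecutive'' claim, which the paper states as an immediate consequence of (C1); your closing remark that (C3) is not needed for conflict-freeness is also consistent with the paper's proof, which likewise never invokes it.
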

%---------------------------------------------------------------------------------
\begin{proof}
Property~(C2) implies that any point contained in at least one interval
of $S$ is contained in a chain interval. Furthermore, from property~(C1) we know
that any point $x\in\reals^1$ is contained in at most two intervals of $\chain$,
and that if $x$ is contained in two chain intervals they must be consecutive.
The color invariant guarantees that two consecutive chain intervals have different colors,
so the (at most two) chain intervals containing~$x$ provide a unique color for~$x$.
Hence, the coloring is conflict-free.
\end{proof}
%---------------------------------------------------------------------------------

%---------------------------------------------------------------------------------
\paragraph{Handling events.}
%---------------------------------------------------------------------------------
Our kinetic coloring algorithm maintains a chain~$\chain$ for $I$ and a
coloring with three colors (excluding the dummy color) satisfying the color invariant.
Later we show how to re-establish the color invariant at each event,
but first we show how to update the chain by adding at most one interval
to the chain and removing at most two.
We distinguish several cases.
\begin{figure}
\centering
\includegraphics{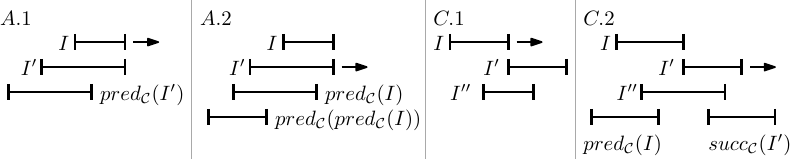}
\caption{Illustration of the different events in the KDS.}
\label{fig:kds-events}
\end{figure}
%---------------------------------------------------------------------------------
\begin{itemize}
\item \emph{Case A: The right endpoints of two intervals $I$ and $I'$ cross.} \\
    Assume without loss of generality that $I$ is shorter than~$I'$.
    We have two subcases.
    \begin{itemize}
    \item \emph{Subcase A.1: Interval $I$ is contained in $I'$ before the event.}
        In this case $I$ was not a chain interval before the event. If after the event
        $I$ is still fully covered by the chain intervals, then there is nothing to do:
        we can keep the same chain.
        Otherwise, property~(C2) is violated after the event. We now proceed as follows.
        First we add $I$ to the chain. If $I$ intersects $\mypred_{\chain}(I')$---note
        that $I'$ must be a chain interval if (C2) is violated---then we remove $I'$
        from the chain.
    \item \emph{Subcase A.2: Interval $I$ is contained in $I'$ after the event.}
        If $I$ was not a chain interval, there is nothing to do.  Otherwise
        property~(C3) no longer holds after the event, and we have to remove~$I$
        from the chain. If $I'$ is also a chain interval then this suffices.
        Otherwise we add $I'$ to the chain, and remove $\mypred_{\chain}(I)$ if $\mypred_{\chain}(\mypred_{\chain}(I))$ intersects~$I'$.
    \end{itemize}
\item \emph{Case B: The left endpoints of two intervals $I$ and $I'$ cross.} \\
    This case is symmetric to Case~A.
\item \emph{Case C: The right endpoint of an interval $I$ crosses the left endpoint of an interval~$I'$.} \\
    Again we have two subcases.
    \begin{itemize}
    \item \emph{Subcase C.1: Intervals $I$ and $I'$ start intersecting.}
        Note that properties~(C2) and (C3) still hold after the event. The only possible violation is in property~(C1), namely when both $I$ and $I'$ are chain intervals and there is a chain interval $I''$ with $\mypred_{\chain}(I'')=I$ and
        $\mysucc_{\chain}(I'')=I'$. In this case we simply remove~$I''$ from the chain.
    \item \emph{Subcase C.2: Intervals $I$ and $I'$ stop intersecting.}
        First note that this cannot violate properties (C1) and (C3).
        The only possible violation is property~(C2), namely when
        both $I$ and $I'$ are chain intervals and there is at least one non-chain interval containing the common endpoint of $I$ and $I'$ at the event.
        Of all such non-chain intervals, let $I''$ be the interval with the
        leftmost left endpoint. Note that $I''$ is not contained in any
        other interval, so we can add $I''$ to the chain without violating~(C3).
        After adding $I''$ we check if we have to remove $I$ and/or $I'$:
        if $I''$ intersects $\mypred_{\chain}(I)$ then we remove
        $I$ from the chain, and if $I''$ intersects $\mysucc_{\chain}(I')$ then we remove
        $I'$ from the chain.
    \end{itemize}
\end{itemize}
%---------------------------------------------------------------------------------
\noindent It is easy to check that in each of these cases the new chain that we generate
has the chain properties~(C1)--(C3). Next we show that each case requires at most
three recolorings and summarize the result.
%---------------------------------------------------------------------------------
\begin{lemma}\label{lem:kin-recolorings}
In each of the above cases, the changes to the chain require at most three recolorings to re-establish the color invariant.
\end{lemma}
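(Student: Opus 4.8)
The plan is to give a single uniform recoloring recipe and then verify the bound of three by a short per-case tally. Recall from the case analysis preceding the lemma that every event modifies the chain by adding at most one interval and removing at most two. The recipe I would use is: (1) recolor every interval that leaves the chain to the dummy color; (2) for the (at most one) interval that enters the chain, inspect its neighbors $\mypred_{\chain}$ and $\mysucc_{\chain}$ in the \emph{updated} chain and assign it one of the three chain colors differing from both of them; and (3) if removing an interval leaves two surviving chain intervals directly adjacent and they happen to share a color, recolor exactly one of them. Steps (2) and (3) are always executable precisely because we allow three chain colors: each recolored interval has at most two chain neighbors, so at least one of the three colors avoids both.

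The structural fact I would isolate first is that steps (2) and (3) never propagate. In step (2) we change only the color of the newly inserted interval, so every other adjacency in the chain is untouched and the single free color simultaneously satisfies both of the new interval's incident constraints; an interval inserted at an end of the chain has only one constraint and is even easier. In step (3), when we recolor one of the two newly adjacent survivors we choose for it a color avoiding its (at most two) current neighbors, so this single recoloring both resolves the conflict and introduces no new one. Thus each added interval costs one recoloring, each removed interval costs one, and each triggered conflict-fix costs one, with no hidden cascades.

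With the recipe fixed, the proof reduces to counting, in each of the cases A.1, A.2, B, C.1, C.2, how many intervals are added or removed and whether step (3) is ever needed. The central point—and the part requiring the most care—is that whenever two intervals are removed (the subcase of A.2 where both $I$ and $\mypred_{\chain}(I)$ leave, and the subcase of C.2 where both $I$ and $I'$ leave), the newly added interval is inserted \emph{between} the two surviving neighbors. Then the only new adjacencies are the two incident to the added interval, which step (2) already handles, so step (3) is not invoked and the total is exactly $1+2=3$. I would verify this placement directly from the geometry of each case: in C.2 the added interval $I''$ covers the vacated gap and, by the very conditions that trigger the removals, reaches both $\mypred_{\chain}(I)$ and $\mysucc_{\chain}(I')$, hence lies between them; the analogous check handles the double-removal subcase of A.2.

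In all remaining cases at most one interval is removed, so step (3) fires at most once. In C.1 we delete only $I''$ and at worst recolor one of its former neighbors, giving at most two recolorings; the one-removal subcases of A.1 and A.2 either add one interval and remove one (the added interval separates the survivors, so no fix is needed) or remove one without adding and apply step (3) once, each bounded by two; Case B is symmetric to Case A, and cases with no removal cost at most one. In every case the tally is at most three, proving the lemma. The only genuine obstacle is the geometric verification that a double removal is always accompanied by a separating insertion; once that is established, everything else is a bounded, non-propagating recoloring argument made possible by the third chain color.
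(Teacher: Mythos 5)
Your proposal is correct and follows essentially the same approach as the paper: recolor removed intervals to the dummy color, give the (at most one) added interval one of the three chain colors avoiding its two chain neighbors, and observe that the only event where surviving chain intervals become newly adjacent and may conflict is Case C.1, fixed by one further recoloring. Your explicit verification that double removals always come with a separating insertion is a point the paper asserts more tersely ("this can only happen in case C.1"), but it is the same argument, with the same per-case tally of at most three.
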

%---------------------------------------------------------------------------------
\begin{proof}
An interval that is a non-chain interval before and after the event need not be recolored.
We have seen that we add at most one interval to the chain and remove at most two.
Removing an interval from the chain requires recoloring it to the dummy color.
Adding an interval requires giving it a color that is different from its predecessor
and its successor in the chain, which we can do since we have three colors
available for the chain. It remains to check if any other chain intervals need to be recolored.
This can only happen in case C.1, when $I$ and $I'$ are chain intervals that are not removed.
In this case it suffices to recolor $I$ with a color different from the color
of~$I'$ and from the color of~$\mypred_{\chain}(I)$. In all cases, the number
of recolorings is at most three.
\end{proof}
%---------------------------------------------------------------------------------

%---------------------------------------------------------------------------------
\begin{theorem}\label{th:kinetic}
Let $S$ be a kinetic set of intervals in~$\reals^1$. We can maintain a conflict-free
coloring for $S$ with four colors at the cost of at most three recolorings per
event, where an event is when two interval endpoints cross each other.
\end{theorem}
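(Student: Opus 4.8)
The plan is to assemble the theorem from the chain machinery developed above, so the proof is essentially a bookkeeping argument that combines the two lemmas with the case analysis. The strategy is to maintain, together with the moving set~$S$, a distinguished subset $\chain\subseteq S$ that is at all times a chain in the sense of properties (C1)--(C3), together with a coloring satisfying the color invariant. Conflict-freeness is then free of charge: by Lemma~\ref{lem:kin-cc-coloring}, any coloring meeting the color invariant is conflict-free, so we never have to reason about conflicts directly once the invariant holds. The color count is equally immediate, since the invariant uses at most three distinct non-dummy colors on the chain plus the single dummy color for the intervals in $S\setminus\chain$, for four colors in total.

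Before handling events I would deal with initialization. At the starting configuration I would build the initial chain greedily from left to right, exactly as in the chain method of Theorem~\ref{th:bounded-universe}(ii), so that (C1)--(C3) hold, then two-color the chain so that consecutive intervals differ and give every remaining interval the dummy color. This is a one-off static step that does not count against the per-event recoloring budget.

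The heart of the argument is the reaction to a single event. Between consecutive events the left-to-right order of all endpoints is fixed, so the chain and its coloring remain valid and nothing needs to change; hence it suffices to describe what happens at one crossing. Here I would invoke the case analysis (Cases~A, B, and~C with their subcases) to update~$\chain$, verifying case by case that the prescribed local surgery---adding at most one interval to the chain and removing at most two---restores (C1)--(C3). Granting this, Lemma~\ref{lem:kin-recolorings} supplies precisely the bound we need: re-establishing the color invariant after such a surgery costs at most three recolorings. Combining this with Lemma~\ref{lem:kin-cc-coloring} then yields the theorem.

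The step I expect to be the genuine obstacle is verifying that the local surgery really re-establishes all three chain properties in \emph{every} subcase, which the summary glosses as ``easy to check''; the delicacy is that removing one interval can force a previously interior interval to become an endpoint of the chain, so one must check that (C2) and (C3) have not been broken elsewhere. The second subtle point is why three chain colors suffice to avoid a recoloring cascade: with only two chain colors an insertion could trigger an alternating relabeling propagating along the whole chain, whereas the extra color provides exactly the slack to recolor the inserted interval (and, in Subcase~C.1, its one affected neighbor) locally and stop there. Once the case analysis and this slack are confirmed, the theorem follows.
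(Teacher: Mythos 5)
Your proposal is correct and follows essentially the same route as the paper: the theorem is exactly the assembly of the chain invariants, Lemma~\ref{lem:kin-cc-coloring} for conflict-freeness, the case analysis (Cases~A--C) showing each event adds at most one chain interval and removes at most two, and Lemma~\ref{lem:kin-recolorings} for the three-recoloring bound, with four colors coming from three chain colors plus the dummy. Your added observations on initialization and on why the third chain color prevents recoloring cascades are accurate glosses on points the paper leaves implicit.
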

%---------------------------------------------------------------------------------

%---------------------------------------------------------------------------------
\begin{wrapfigure}{R}{1.5in}
\centering
\includegraphics{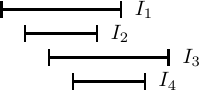}
\caption{The gadget used to show the lower bound.}
\label{fig:kds-efficiency}
\end{wrapfigure}
%---------------------------------------------------------------------------------
%---------------------------------------------------------------------------------
\paragraph{A lower bound.}
%---------------------------------------------------------------------------------
Now consider the simple scenario where the intervals are rigid---each interval keeps
the same length over time---and each interval is either stationary or moves with
unit speed to the right. Our coloring algorithm may perform recolorings whenever
two endpoints cross, which means that we do $O(n^2)$ recolorings in total.
We show that even in this simple setting, this bound is tight in the worst case
if we use at most four colors.

Consider four intervals $I_1, I_2, I_3, I_4$ where $I_i = (a_i,b_i)$, with $a_i < b_i$ as shown in Figure~\ref{fig:kds-efficiency}. Here $I_2 \subset I_1$, $I_4 \subset I_3$, the right endpoints of $I_1$ and $I_2$ are contained in $I_3 \cap I_4$, and the left endpoints of $I_3$ and $I_4$ are contained in $I_1 \cap I_2$. The exact locations of the endpoints with respect to each other is not important and we focus on the different overlap sets of the gadget. Specifically within a gadget there is a point contained in each of the following sets,
$$G_1, \ldots, G_7 :=
\{I_1\},
\{I_1, I_2\},
\{I_1, I_2, I_3, \},
\{I_1, I_2, I_3, I_4\},
\{I_1, I_3, I_4\},
\{I_3, I_4\},
\{I_3\}.$$
Based on these sets we can show that no coloring for crossing gadgets exists that provides a valid conflict-free coloring for each combination of intersection sets between the two gadgets. The proof relies on the following lemma.

%---------------------------------------------------------------------------------
\begin{lemma}\label{lem:kin-gadget}
Let $G = \{I_1, I_2, I_3, I_4\}$ and $H = \{J_1, J_2, J_3, J_4\}$ be two gadgets, with overlap sets $G_1, \ldots, G_7$ and $H_1, \ldots, H_7$ as defined above. There is no 4-coloring for $G$ and $H$
such that all sets
$\{G_1, \ldots, G_7\} \cup \{H_1, \ldots, H_7\} \cup \{ G_i \cup H_i |\  1\leq i,j \leq 7\ \}$
are conflict-free.
\end{lemma}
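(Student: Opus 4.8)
The plan is to forget the geometry after extracting its combinatorial content and reduce the whole statement to counting multiplicities of colors, organized by union size. Throughout I write $c_1,c_2,c_3,c_4$ for the colors of $I_1,I_2,I_3,I_4$ and $d_1,d_2,d_3,d_4$ for those of $J_1,\ldots,J_4$, and I treat each overlap set as the multiset of colors of the intervals in it; conflict-freeness of such a set means some color occurs with multiplicity exactly one.

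First I would characterize when a single gadget is conflict-free. The only binding overlap sets are $G_2=\{I_1,I_2\}$, $G_6=\{I_3,I_4\}$, and $G_4=\{I_1,I_2,I_3,I_4\}$: the singletons $G_1,G_7$ are trivial, and $G_3,G_5$ are automatically conflict-free once $c_1\neq c_2$ and $c_3\neq c_4$ (a three-element set with a repeated color still has a unique one). A short check shows $G_4$ is conflict-free iff the two pairs differ as sets. Hence a gadget is conflict-free exactly when $c_1\neq c_2$, $c_3\neq c_4$, and $A:=\{c_1,c_2\}\neq\{c_3,c_4\}=:B$; likewise $H$ yields distinct pairs $A':=\{d_1,d_2\}$ and $B':=\{d_3,d_4\}$.

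Next I would bring in the four ``pure pair'' unions $G_2\cup H_2$, $G_2\cup H_6$, $G_6\cup H_2$, $G_6\cup H_6$. Each is a union of two color-pairs, and such a four-element multiset is conflict-free iff the two pairs are distinct as sets. Combined with the single-gadget conditions, this forces $A,B,A',B'$ to be four \emph{distinct} two-element subsets of the palette $\{1,2,3,4\}$. The decisive device is to view each two-subset as an edge of $K_4$ on the color set, so the four subsets become four of the six edges, i.e.\ $K_4$ with two edges deleted. Now I analyze $G_4\cup H_4$, the union of all eight intervals: as a multiset it is the disjoint union of the four pairs, and it is conflict-free iff some color has multiplicity one, which in the edge picture happens iff the two missing edges are adjacent (share a color) --- equivalently, the four present edges form a triangle plus one pendant edge. (If $G_4\cup H_4$ is not conflict-free we are already done, since it belongs to the family.)

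Finally I would exploit the four ``mixed'' six-element unions $G_4\cup H_2$, $G_4\cup H_6$, $G_2\cup H_4$, $G_6\cup H_4$, each of which is the union of exactly three of the pairs $A,B,A',B'$ (omitting $B',A',B,A$ respectively). A three-pair multiset is conflict-free iff its three edges do \emph{not} form a triangle, since a triangle forces all three present colors to multiplicity two. But in the triangle-plus-pendant configuration forced by $G_4\cup H_4$, exactly one of the four triples --- the one obtained by deleting the pendant edge --- is a triangle, and the corresponding mixed union is therefore not conflict-free; this union lies in the required family no matter which of $A,B,A',B'$ is the pendant, so no $4$-coloring can satisfy all constraints. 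The main obstacle is the third step: a priori this is a sprawling case analysis over color assignments, and passing to the $K_4$ edge model is what tames it, turning both the $G_4\cup H_4$ condition and the failure of the mixed unions into elementary degree/triangle counts so that the contradiction is forced rather than verified case by case.
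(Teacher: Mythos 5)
Your proof is correct, and it takes a genuinely different route from the paper's. The paper argues by direct case analysis on colors: it first rules out that both gadgets use all four colors (via $G_4\cup H_4$) and that a gadget uses at most two colors; then, assuming without loss of generality that $G$ is colored so that $G_2=\{\mathrm{red},\mathrm{blue}\}$ and $G_6=\{\mathrm{red},\mathrm{green}\}$, it splits on whether one or two of $J_1,\ldots,J_4$ carry the fourth color, and in each subcase exhibits a failing union among $G_2\cup H_6$, $G_6\cup H_6$, $G_4\cup H_6$ (or the analogues with $H_2$ or $H_4$). Your argument instead encodes the pair-sets $A,B,A',B'$ as edges of $K_4$ on the color palette: the single-gadget constraints together with the four pure-pair unions force four \emph{distinct} edges; conflict-freeness of any union of these pairs is exactly the existence of a degree-one vertex in the corresponding subgraph; hence $G_4\cup H_4$ forces the triangle-plus-pendant configuration, and the unique triangle among the four edge-triples kills precisely one of the mixed unions $G_4\cup H_2$, $G_4\cup H_6$, $G_2\cup H_4$, $G_6\cup H_4$, all of which lie in the required family. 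Both proofs rest on the same multiplicity-counting observation (the multiplicity of a color in a union of pairs is its degree in the graph of pairs), but yours trades the paper's color-by-color verification for a structural degree/triangle argument. What this buys: your argument is case-free, symmetric in the two gadgets, and automatically covers the configuration in which neither gadget uses the fourth color at all --- a case the paper's split into ``one yellow'' versus ``two yellows'' formally omits (it is repaired by repeating the argument of the paper's first case, so it is a presentational rather than substantive gap; your step forcing four distinct edges shows directly that all four colors must occur). What the paper's version buys: it is shorter to verify from a standing start and needs no graph-theoretic translation.
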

%---------------------------------------------------------------------------------
\begin{proof}
We can assume that not both~$I_1, I_2, I_3, I_4$
and~$J_1, J_2, J_3, J_4$ use all four colors,
otherwise~$G_4 \cup H_4 = \{I_1, I_2, I_3, I_4,J_1, J_2, J_3, J_4\}$
is not conflict-free. It is also not possible
to use at most two colors, since each gadget by itself needs to be conflict-free. Hence, suppose that there
are exactly three colors among~$I_1, I_2, I_3, I_4$
(the other case is symmetric), say two are red, one is blue,
and one is green. We define~$\col(G_i)$, respectively~$\col(H_i)$,
to be the multiset
of the colors used by the intervals in~$G_i$, respectively~$H_i$.
Then~$\col(G_4)=\{\mbox{red}, \mbox{red}, \mbox{blue}, \mbox{green} \}$
and without loss of generality,~$\col(G_2)=\{\mbox{red}, \mbox{blue} \}$
and~$\col(G_6)=\{\mbox{red}, \mbox{green} \}$.
We now have two cases.
\begin{enumerate}
\item One interval among~$J_1, J_2, J_3, J_4$
uses the fourth color, say yellow.
If~$J_1$ or~$J_2$ is yellow, then either~$\col(H_6)=\{\mbox{red}, \mbox{blue} \}$,
implying that $G_2 \cup H_6$ is not conflict-free;
or~$\col(H_6)=\{\mbox{red}, \mbox{green} \}$
implying that $G_6 \cup H_6$ is not conflict-free;
or~$\col(H_6)=\{\mbox{blue}, \mbox{green} \}$
implying that $G_4 \cup H_6$ is not conflict-free.
A similar argument holds when ~$J_3$ or~$J_4$ is yellow.
\item Two intervals among~$J_1, J_2, J_3, J_4$
use yellow. It follows that $H_4$ contains two yellow intervals and the remaining two intervals are colored either~$\{\mbox{red}, \mbox{blue} \}$,
implying that~$G_2 \cup H_4$ is not conflict-free;
or~$\{\mbox{red}, \mbox{green} \}$,
implying that~$G_6 \cup H_4$ is not conflict-free;
or~$\{\mbox{blue}, \mbox{green} \}$,
implying that~$G_4 \cup H_4$ is not conflict-free. \\
\end{enumerate}
\vspace{-2.5\baselineskip}
\end{proof}
%---------------------------------------------------------------------------------

Now we place $\Omega(n)$ of these gadgets in two groups and for simplicity assume a gadget has width of 1. The gadgets in the first group are spaced with distance 2 between them, so a gadget from the second group can fit between any two consecutive gadgets. In the second group the gadgets are spaced with distance $3n$ between them, so that all gadgets of the first group fit between them. All gadgets of the first group then move at the same speed, starting somewhere to the left of the second group and moving to the right. The gadgets of the second group remain stationary. These motions ensure that each gadget of first group will cross each gadget of the second group, generating $\Omega(n^2)$ crossing events, each of which results in at least one recoloring by Lemma~\ref{lem:kin-gadget}.

%---------------------------------------------------------------------------------
\begin{theorem}
For any $n>0$, there is a set of $8n$ intervals, each of which is either stationary or
moves with unit speed to the right, so that when coloring the intervals using four colors at least $n^2$ recolorings are required to maintain a conflict-free coloring.  
\end{theorem}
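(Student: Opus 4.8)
The plan is to formalize the construction sketched above and convert the impossibility result of Lemma~\ref{lem:kin-gadget} into a lower bound via a counting argument over crossing events. I would take exactly $n$ gadgets in each of the two groups, so that the total number of intervals is $8n$ as required, and fix their initial positions and velocities as described: every gadget has width~$1$, the first-group gadgets are spaced at distance~$2$ and all translate rightward at unit speed, while the second-group gadgets are spaced at distance~$3n$ and remain stationary. The statement to reach is that the algorithm must perform at least one recoloring for each of the $n^2$ gadget pairs, and that these recolorings are distinct.

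First I would establish the combinatorial structure of the motion. Passing to the reference frame of the moving first group, the second-group gadgets sweep leftward through the now-stationary first group. Since the whole first group spans width less than~$3n$ while consecutive second-group gadgets are~$3n$ apart, at most one second-group gadget lies within the span of the first group at any time; and since the gaps between consecutive first-group gadgets are wide enough to contain a full gadget, that single second-group gadget meets the first-group gadgets one at a time. This shows that the $n^2$ gadget pairs cross in $n^2$ temporally disjoint crossing events, each an isolated interaction of a single moving gadget~$G$ with a single stationary gadget~$H$.

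Next I would verify the geometric claim that during each such crossing every overlap set used in the proof of Lemma~\ref{lem:kin-gadget}---in particular the sets $G_i\cup H_j$ for $i,j\in\{2,4,6\}$---is realized at some point in space and time. This is a direct consequence of the gadget design: as the moving gadget sweeps across the stationary one, the two gadgets pass through a full range of relative positions in which the individual overlap sets $G_1,\ldots,G_7$ and $H_1,\ldots,H_7$ superpose in all the required ways. I expect this step, a careful but essentially routine inspection of the seven overlap regions of each gadget and how they combine during the sweep, to be the main obstacle, since it is where the abstract hypothesis of the lemma must be matched against the actual continuous geometry of the moving intervals.

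Finally I would conclude by contradiction within each crossing. If the coloring were left unchanged throughout the entire duration of one crossing, then a single fixed $4$-coloring would have to keep all of the overlap sets appearing during that crossing conflict-free, contradicting Lemma~\ref{lem:kin-gadget}. Hence the algorithm performs at least one recoloring during each crossing. Because the $n^2$ crossings are temporally disjoint by the first step, these recolorings are distinct operations, giving at least $n^2$ recolorings in total and proving the theorem.
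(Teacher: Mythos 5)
Your proposal is correct and follows essentially the same route as the paper: the identical two-group placement (spacing $2$ and $3n$, unit-speed motion of the first group) combined with Lemma~\ref{lem:kin-gadget} to force at least one recoloring per gadget pair. In fact, you make explicit two points the paper leaves implicit---that the $n^2$ crossings are temporally disjoint, and that every combination $G_i \cup H_j$ is realized at some moment of each crossing so the lemma's hypothesis applies to a coloring held fixed during that crossing---which only strengthens the write-up.
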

%---------------------------------------------------------------------------------

\section{Conclusions}
We introduced the fully-dynamic conflict-free coloring problem, where we want to
maintain a conflict-free coloring for a set of geometric objects under
insertions and deletions, and we presented
a number of lower bounds and algorithms for the 1-dimensional version of the problem.
There are several open problems.

First of all, there is still a gap between our upper
and lower bounds. Recall that Theorem~\ref{thm_lowerbound} gives a lower
bound in the semi-dynamic (insertion-only) case. It would be interesting to see
if a stronger lower bound can be obtained by mixing insertions and deletions.

Second, our lower bounds give a bound on the number of colors needed
as a function of the worst-case number of re-colorings
per insertion. It would also be interesting to investigate lower bounds
as a function of the amortized number of re-colorings and/or to develop
better algorithms in this setting.

Finally, an obvious area of future research is to consider the problem in two- or higher-dimensional
space. In particular the problem of maintaining a conflict-free coloring for a set of disks in the
plane under insertions and deletions---which comes much closer to the application that motivated
conflict-free colorings, namely frequency assignment in wireless networks---is of interest.
We note that very recently De Berg and Markovic~\cite{dynamic-CF-2d} obtained several results on
the two-dimensional problem.

%%%%%%%%%%%%%%%%%%%%%%%%%%%%--BIB--%%%%%%%%%%%%%%%%%%%%%%%%%%%%%%%%
\bibliography{dynamic-cf-coloringbibshort}

\end{document}